%% file: paper.tex
\documentclass[acmsmall, nonacm]{acmart}

\acmJournal{PACMPL}
\startPage{1}

\setcopyright{none}

\usepackage{booktabs}   
\usepackage{subcaption} 
\usepackage{anyfontsize}                        
\usepackage{nameref}
\usepackage{hyperref}
\usepackage{iris}
\usepackage{aneris}
\usepackage{heaplang}
\usepackage{includes}
\usepackage{todonotes}
\usepackage[inline]{enumitem}
\usepackage{amsmath}
\usepackage{nameref}
\usepackage{wrapfig}
\usepackage{array}
\usepackage{mathrsfs}
\usepackage{xspace}
\usepackage{pict2e}
\usepackage{tikz}
\usepackage{cleveref}
\usepackage{fontawesome}
\usepackage{graphicx}
\usetikzlibrary{positioning, calc, matrix, arrows.meta, arrows}
\tikzset{
mymat/.style={
  matrix of math nodes,
  text height=1.67ex,
  text depth=0.75ex,
  text width=2.30ex,
  align=center,
  column sep=-\pgflinewidth
  },
mymats/.style={
  mymat,
  nodes={draw,fill=#1}
  }  
}
\usepackage{caption}

\definecolor{darkblue}{rgb}{0.0, 0.0, 0.55}
\definecolor{darkgreen}{rgb}{0.0, 0.55, 0}
\definecolor{darkred}{rgb}{0.55, 0, 0}

\usepackage[belowskip=-10pt,aboveskip=4pt]{caption}
\include{macros}

\include{transaction_macros}

\allowdisplaybreaks

\begin{document}
\title{Verifying Isolation Levels of Database Implementations for Free Using Separation Logic}

\author{Anders Alnor Mathiasen}
\orcid{0009-0005-6587-5590}                
\affiliation{
  \institution{Aarhus University}            
  \country{Denmark}                          
}
\email{alnor@cs.au.dk}                  

\author{Amin Timany}
\orcid{0000-0002-2237-851X}             
\affiliation{            
  \institution{Aarhus University}
  \country{Denmark}                    
}
\email{timany@cs.au.dk}          

\author{Lars Birkedal}
\orcid{0000-0003-1320-0098}             
\affiliation{
  \institution{Aarhus University}            
  \country{Denmark}                    
}
\email{birkedal@cs.au.dk}          

\begin{abstract}
  Modern databases are highly concurrent and provide transactions as a mean of grouping several database operations 
  into atomically applied units. Database vendors and software engineers use isolation levels to describe the consistency 
  guarantees of transactions. The popular isolation levels give weak guarantees, 
  with intricate semantics, to optimize performance of applications.
  The problem of assuring that database implementations actually implement the isolation level guarantees 
  that application developers build their systems upon has received a great deal of attention from the 
  testing community. But until now, there exists no method for formally verifying that a database implementation 
  actually implements the isolation level that database vendors says it provides.
  In this paper, we present a method for verifying that a database implements an isolation level: 
  we derive isolation levels directly, as formalized in transactional consistency models by the database community,
  from the structure of separation logic specifications.
  By doing so, we consider all program executions that a database and arbitrary clients of the database could produce.
  The result is a so-called free theorem meaning 
  that any database implementation, whose operations are verified against a specific set of separation logic specifications, 
  actually implements its isolation level.
  As all proofs in this paper are mechanized in the Rocq proof assistant and build upon a detailed
  semantic model of program execution, we believe this contribution raises the bar for the 
  achievable robustness of databases.
\end{abstract}

\begin{CCSXML}
<ccs2012>
   <concept>
       <concept_id>10003752.10003790.10002990</concept_id>
       <concept_desc>Theory of computation~Logic and verification</concept_desc>
       <concept_significance>500</concept_significance>
       </concept>
   <concept>
       <concept_id>10003752.10003790.10011742</concept_id>
       <concept_desc>Theory of computation~Separation logic</concept_desc>
       <concept_significance>500</concept_significance>
       </concept>
   <concept>
       <concept_id>10003752.10003790.10011741</concept_id>
       <concept_desc>Theory of computation~Hoare logic</concept_desc>
       <concept_significance>500</concept_significance>
       </concept>
   <concept>
       <concept_id>10003752.10003790.10003800</concept_id>
       <concept_desc>Theory of computation~Higher order logic</concept_desc>
       <concept_significance>500</concept_significance>
       </concept>
   <concept>
       <concept_id>10003752.10010070.10010111</concept_id>
       <concept_desc>Theory of computation~Database theory</concept_desc>
       <concept_significance>500</concept_significance>
       </concept>
   <concept>
       <concept_id>10003752.10003809.10010172</concept_id>
       <concept_desc>Theory of computation~Distributed algorithms</concept_desc>
       <concept_significance>300</concept_significance>
       </concept>
   <concept>
       <concept_id>10003752.10003790.10011119</concept_id>
       <concept_desc>Theory of computation~Abstraction</concept_desc>
       <concept_significance>500</concept_significance>
       </concept>
 </ccs2012>
\end{CCSXML}

\ccsdesc[500]{Theory of computation~Logic and verification}
\ccsdesc[500]{Theory of computation~Separation logic}
\ccsdesc[500]{Theory of computation~Hoare logic}
\ccsdesc[500]{Theory of computation~Higher order logic}
\ccsdesc[500]{Theory of computation~Database theory}
\ccsdesc[300]{Theory of computation~Distributed algorithms}
\ccsdesc[500]{Theory of computation~Abstraction}

\keywords{Distributed systems, transactions, transactional memory, isolation levels, 
databases, concurrency, separation logic, higher-order logic, 
traces, formal verification, Iris, Rocq} 

\maketitle

\makeatletter
\everydisplay\expandafter{\the\everydisplay \small}
\makeatother


\input{sections/introduction}
\input{sections/overview}
\input{sections/model}
\input{sections/spec}
\input{sections/method}
\input{sections/structure}
\input{sections/proof}
\input{sections/related_work}
\input{sections/conclusion}
\bibliography{paper}
\newpage
\appendix
\input{sections/example}

%

\end{document}

%% file: macros.tex
\makeatletter
\newcommand{\hourglass}{%
  \mathbin{
  \mspace{1mu}
  \mathpalette\vinc@hourglass\relax}
  \mspace{1mu}
}
\newcommand{\vinc@hourglass}[2]{%
  \begingroup
  \settowidth{\unitlength}{$\m@th#1\mspace{6mu}$}
  \begin{picture}(1,1.732)
  \vinc@linethickness{#1}
  \roundcap\roundjoin
  \Line(0,0)(1,1.732)(0,1.732)(1,0)(0,0)
  \end{picture}%
  \endgroup
}
\newcommand{\vinc@linethickness}[1]{%
  \linethickness{%
      \ifx#1\displaystyle 0.8\fontdimen8\textfont3\else
      \ifx#1\textstyle 0.8\fontdimen8\textfont3\else
      \ifx#1\scriptstyle0.8\fontdimen8\scriptfont3\else
      1\fontdimen8\scriptscriptfont3\fi\fi\fi
  }%
}
\makeatother


%% file: transaction_macros.tex
\newcommand{\TC}{\mathbf{C}}

\newcommand{\TR}{\mathbf{R}}
\newcommand{\TW}{\mathbf{W}}

\newcommand{\TTransaction}{\mathit{T}}
\newcommand{\TTransactionSet}[1]{\mathsf{O}_{#1}}

\newcommand{\TTransactionOrder}[1]{<_{#1}}

\newcommand{\TTransactionSetOf}{\mathcal{T}}

\newcommand{\TLevel}{\mathit{I}}

\newcommand{\TState}{s}
\newcommand{\TTransition}[3]{#1 \xlongrightarrow[]{#2} #3}
\newcommand{\TExecution}{\mathit{E}}

\newcommand{\TReadStates}[5]{\mathsf{ReadState}\big(#1, #2, #3, #4, #5\big)}

\newcommand{\TPreread}[2]{\mathsf{Preread}(#1, #2)}
\newcommand{\TComplete}[4]{\mathsf{Complete}(#1, #2, #3, #4)}

\newcommand{\TNoConf}[3]{\mathsf{NoConf}(#1, #2, #3)}
\newcommand{\THasWrite}[2]{\mathsf{HasWrite}(#1, #2)}
\newcommand{\TNoPriorWrite}[2]{\mathsf{NoPriorWrite}(#1, #2)}
\newcommand{\TComTest}[3]{\mathsf{CT}_{#1}(#2, #3)}
\newcommand{\TComTestRU}[2]{\mathsf{CT}_{\mathit{RU}}(#1, #2)}
\newcommand{\TComTestRC}[2]{\mathsf{CT}_{\mathit{RC}}(#1, #2)}

\newcommand{\TComTestSI}[2]{\mathsf{CT}_{\mathit{SI}}(#1, #2)}

\newcommand{\TTags}{\mathsf{tags}}
\newcommand{\TTag}{\mathsf{tag}}
\newcommand{\Tfresh}{\langkw{fresh}}
\newcommand{\Temit}{\langkw{emit}}

\newcommand{\SI}{\scaleto{\mathsf{SI}}{3.5pt}.}
\newcommand{\Ttrace}[1]{\mathsf{Trace}(#1)}

\newcommand{\TtraceInstance}{t}

\newcommand{\TtraceInv}[1]{\mathsf{TraceInv}(#1)}
\newcommand{\ComTrans}[1]{\mathsf{comTrans}(#1)}

\newcommand{\ValSeq}[1]{\mathsf{ValidSeq}(#1)}
\newcommand{\ValSeqSym}{\mathsf{ValidSeq}}
\newcommand{\ValTrans}[1]{\mathsf{ValidTrans}(#1)}
\newcommand{\Trace}{t}
\newcommand{\LTrace}{\overline{t}}
\newcommand{\ValTrace}[2]{\mathsf{ValidTrace}(#1, #2)}
\newcommand{\ExecOf}[2]{\mathsf{CoherenceExec}(#1, #2)}
\newcommand{\CohTrace}[2]{\mathsf{CoherenceTrace}(#1, #2)}
\newcommand{\CohTrans}[2]{\mathsf{CoherenceTrans}(#1, #2)}
\newcommand{\PreFunc}{\mathsf{PreEvent}}
\newcommand{\PostFunc}{\mathsf{PostEvent}}
\newcommand{\PrefixReflect}{\mathsf{Prefix}}

\newcommand{\SIem}{\langkw{emit}}
\newcommand{\SIemC}[1]{\SIem ~ #1}
\newcommand{\SIfr}{\langkw{fresh}}
\newcommand{\SIfrC}[1]{\SIfr ~ #1}
\newcommand{\SIrd}{\textlang{read}}
\newcommand{\SIrdC}[2]{\SIrd ~ #1 ~ #2}
\newcommand{\SIwr}{\textlang{write}}
\newcommand{\SIwrC}[3]{\SIwr ~ #1 ~ #2 ~ #3}

\newcommand{\SIstart}{\textlang{start}}
\newcommand{\SIstartC}[1]{\SIstart ~ #1}
\newcommand{\SIcommit}{\textlang{commit}}
\newcommand{\SIcommitC}[1]{\SIcommit ~ #1}

\newcommand{\SIassert}[1]{\mathsf{assert}(#1)}
\newcommand{\SIletin}[2]{\mathbf{let} ~ #1 \mathrel{=} #2 ~ \mathbf{in}}

\newcommand{\SIconnType}{\mathsf{Conn}}

\newcommand{\SIboolType}{\mathsf{Bool}}
\newcommand{\SIkeyType}{\mathsf{Key}}

\definecolor{commentcolor}{rgb}{0.0, 0.6, 0.18}

\newcommand{\mapstoMemSymb}{\mapsto}
\newcommand{\mapstoMem}[2]{#1 \mapstoMemSymb #2}
\newcommand{\mapstoCacheSymb}[1]{\mapsto_{#1}}
\newcommand{\mapstoCache}[3]{#1 \mapstoCacheSymb{#2} #3}
\newcommand{\KeyUpdStatusSymb}{\mathsf{KeyUpdStatus}}
\newcommand{\KeyUpdStatus}[3]{\KeyUpdStatusSymb(#1, #2, #3)}
\newcommand{\ConnectionStateSymb}{\mathsf{ConnectionState}}

\newcommand{\CanStartSymb}{\mathsf{CanStart}}
\newcommand{\CanStart}[1]{\ConnectionStateSymb(#1, \CanStartSymb)}
\newcommand{\ActiveSymb}{\mathsf{Active}}
\newcommand{\Active}[2]{\ConnectionStateSymb(#1, \ActiveSymb(#2))}

\newcommand{\ipropType}{\mathsf{iProp}}
\newcommand{\statusType}{\mathsf{Status}}
\newcommand{\histType}{\mathsf{Hist}}
\newcommand{\valType}{\mathsf{Val}}
\newcommand{\PerLin}{\mathsf{PermissionLin}}
\newcommand{\PerPost}{\mathsf{PermissionPost}}
\newcommand{\Spec}{\mathit{spec}}

\newcommand{\HistValSymb}{{last}}
\newcommand{\HistVal}[1]{\HistValSymb(#1)}
\newcommand{\HistEmpty}{[~]}
\newcommand{\HistNonEmpty}[1]{[#1]}

\newcommand{\CanCommitPredicateSymb}{\mathsf{CanCommit}}
\newcommand{\CanCommitPredicate}[3]{\CanCommitPredicateSymb(#1, #2, #3)}
\newcommand{\CommitHistSymb}{\mathit{update\_hist}}
\newcommand{\CommitHist}[2]{\CommitHistSymb(#1, #2)}
\newcommand{\MapLookup}[2]{#1[#2]}
\newcommand{\ListLookup}[2]{#1[#2]}

\newcommand{\Wrap}[1]{\mathbf{Wrap}(#1)}
\newcommand{\Wrapres}[1]{\mathsf{Wrap}(#1)}

\newcommand{\Cst}{c}
\newcommand{\Key}{k}
\newcommand{\Keys}{\mathsf{Keys}}
\newcommand{\Hist}{h}

\newcommand{\map}{m}

\newcommand{\Snapshot}{\mathit{ms}}
\newcommand{\Cache}{\mathit{mc}}

\newcommand{\ValueOption}{ov}

\newcommand{\Boolean}{b}

\newcommand{\srvsa}{srv}

\newcommand{\FinishToken}{\mathsf{Tok}}

%% file: sections/introduction.tex
\section{Introduction}
\label{sec:1}

Transactions, the idea of viewing a number of operations as a single unit, are a
cornerstone of database theory and application --- most, if
not all, modern databases support transactions (\eg Oracle, MySQL, Microsoft
SQL Server, PostgreSQL and MongoDB). 
However, when many transactions run concurrently, the consistency guarantees provided by databases 
become subtle and counterintuitive.
Isolation levels describe how isolated concurrent transactions are from each other, and 
it is the canonical way of presenting transactional guarantees. 
Production databases use weak isolation levels,
such as \emph{read committed} and \emph{snapshot isolation},
as default, \eg all the aforementioned databases use 
weak isolation levels. In contrast to strong isolation, which is often referred to as serializability, 
weak isolation levels do not correspond to executing transactions one at time. 
It is important that software engineers who build applications, while assuming these 
isolation level guarantees, 
can be assured that databases actually implement the isolation level vendors claim. 
The database community has a long tradition of formalizing 
and precisely defining weak isolation level guarantees of abstract models of databases
in so-called \emph{transactional consistency models}.
Existing formal correctness proofs of concrete implementations of databases, on the other hand, 
only argue informally that their specifications capture the expected isolation level.

\paragraph{Consistency Models} Much work has been conducted to create transactional consistency models to formally specify transaction isolation guarantees. These models often fall into one of
three categories \citep{Xiong2020}: operational semantics \citep{Xiong2020, transactions-gotsman,
  Crooks17, Kaki17}, abstract executions
\citep{linearizability+strict-serializability, CAP, Burckhardt12, Ketsman23}, or
dependency graphs \citep{Adya99, Adya00}. Historically, the SQL standard of 1992
introduced four isolation levels (informally, an isolation level is a
consistency guarantee between concurrently executing transactions) that SQL
compliant systems must make available to its users: read uncommitted, read
committed, repeatable read, and serializability \citep{orig-serializability}. The definitions of the ANSI SQL
1992 standard were given using so-called ``phenomena'', \ie examples showcasing behavior that should be prohibited when executing concurrent transactions. Critiques that the definitions were
imprecise followed \citep{orig-SI}, leading to the proposal of better
phenomena and another popular isolation level called \emph{snapshot isolation}.
\citet{Adya99} and \citet{Adya00} gave the first widely accepted formal definitions,
using dependency graphs, for a plethora of isolation levels including the ANSI SQL isolation levels and snapshot isolation. 
Though extremely useful, dependency graph models are neither amenable to formal proofs that a concrete implementation of a database is correct with respect to such models, nor are they suitable for formal reasoning about correctness of clients of a database.%
\footnote{Here, by formal verification we mean proving correctness with respect to a detailed semantic model of program execution, \eg using a program logic \`a la Hoare proved sound wrt. such a semantic model.}

\paragraph{Towards Connecting Consistency Models with Executable Database Implementations} 
\citet{Crooks17} present a state-based model for transactional consistency and
prove that their specifications of different isolation levels are equivalent to
those of the recognized dependency graph model. The advantage of the state-based
model, compared to the dependency graph model, is that it is amenable to formal
reasoning about implementations. The recent work of \citet{Soethout2021} makes
the first steps in this direction by checking a \TLA{} specification 
of the state-based model against a database implementation 
(written in the non-executable PlusCal language inside \TLA{}).
\TLA{} is a specification language, used to model local and distributed systems, 
with a model checker to check temporal logic assertions.
While this is a step towards closing the gap between transactional consistency models 
and executable code, we are not fully there yet for a number of reasons:
\begin{itemize}
  \item There is a significant gap between \TLA{} implementations and concrete
  implementations, leaving room for implementation bugs even if the \TLA{} database implementation written in the non-executable PlusCal language itself is correct \citep{paxosLive}.
  \item Correctness in \citet{Soethout2021} means model checking a given transactional workload  against the PlusCal implementation, but checking all possible interleavings of significantly sized workloads is infeasible due to the runtime complexity.
  \item A \TLA{} specification of a system is not suitable for modular reasoning about clients, which may use other libraries in addition to the database API.
\end{itemize}

When it comes to testing databases for correctly implementing isolation levels, 
a splurge of papers have been published recently 
\cite{10.14778/3583140.3583145, 10.1145/3689742, 10.1145/3360591, 10.14778/3430915.3430918, 10.1145/3552326.3567492, 
  moldrup2025awdit, gu2024isovista, pick2025checking}.
The focus of these papers is on \emph{black box testing}, meaning to check whether 
an isolation level is violated in an observed
execution of a transactional workload from a database
(notice the difference between checking one execution of a transactional workload and enumerating all interleavings 
as in \citet{Soethout2021}). For instance,
\citet{pick2025checking} check transactions for isolation levels violations against the model of \citet{Crooks17}, and 
\citet{10.14778/3430915.3430918} uses the model of \citet{Adya00}.
While the application of black-box testing techniques on industrial database systems is impressive, they can only show the presence of bugs,
not the absence of bugs --- to this end, we need to use approaches with stronger guarantees, such as separation logic.

\paragraph{Recent Work in Separation Logic} Separation logic \citep{1029817} has
been very successful in reasoning about the absence of bugs in concrete systems' implementations. In
fact, the state-of-the-art separation logic, the Iris framework \citep{iris,
  iris2, iris3, irisjournal}, has recently been used to give Hoare-style
specifications and correctness proofs for database implementations of the strong
isolation level serializability \citep{vMVCC} as well as the weak isolation
levels read uncommitted, read committed, and snapshot isolation
\citep{SI-placeholder}. 
These papers build upon the Grove \citep{grove}
and Aneris \cite{DBLP:conf/esop/Krogh-Jespersen20}
program logics, respectively, which are both 
instantiations of Iris with a network semantics. 
Grove works with a subset of the Go programming language, and Aneris uses a subset of OCaml.
The specifications in these papers reason about 
real program executions and are highly modular.
In fact, the specifications are parametric in the choice of resources; 
the implementation details of the separation logic resources used in the specifications are 
opaque to the clients using them. 
However, these formal proofs fall short because they
make no connection to any of the transactional consistency models.

\paragraph{Contributions}
In this paper, \textbf{we contribute a method for verifying
executable database implementations against their isolation levels}.
We do so by formally deriving isolation levels,
as formalized in the dependency graph and the state-based models,
 directly
from the structure of the separation logic specifications from \citet{SI-placeholder}.
This is done by adapting the method of creating invariants on execution traces from \citet{free-theorems},
where execution traces are derived from ghost-code instrumentation that can be removed at runtime:
we formulate the state-based consistency model as a predicate on 
execution traces that must invariantly hold.
Using the embedding of the state-based model as an invariant on execution traces, 
\textbf{we establish the connection 
with no additional assumptions about the verified database implementations,
and for any possible usage by clients of the database}. 
We remark that as correctness is proven invariantly, for any client usage of the database,
the correctness result holds for any workload of transactions,
and thus it is a much stronger property than that provided by existing model checking approaches, 
which only work given a specific transaction workload.
The connection is a so-called \emph{free theorem}, as we prove the theorem only by assuming 
that the database implementations satisfy the specifications of \citet{SI-placeholder}.
Thus, our result only has to be proven once for each isolation level
before all database implementations gains the result, this is no matter how complex 
the database implementation may be, \eg it can be distributed over many nodes 
and use advanced locking mechanisms, as long as it satisfies the separation logic specifications.
We provide the proofs for the isolation levels read uncommitted, 
read committed and snapshot isolation in this paper.

Concretely, in this paper we make the following technical contributions:
\begin{itemize}
  \item A formulation and formalization of a realistic version of the state-based transactional consistency 
  model \citep{Crooks17}. Lifting the existing formulation of the model to a more realistic version, 
  useable for reasoning about real transaction workloads, forces us to remove strong simplifying assumptions. 
  \item We embed the state-based transactional consistency 
  model into the Iris separation logic via invariants on traces of executions.
  This is done through multiple transformation phases using insights about linearization points of database operations.
  \item We prove that the separation logic specifications of the weak isolation levels 
  read uncommitted, read committed, and snapshot isolation all adhere to their 
  state-based model specifications by utilizing the specifications' parametricity in separation logic resources. 
  Thus, we achieve modular per-operation specifications on the level of program executions 
  that are verifiably correct with respect to the state-based consistency model and, by the equivalence in \citet{Crooks17}, 
  the dependency graph model, with no observable changes in the specifications or runtime overhead.
  In essence, we derive isolation levels directly from the structure of separation logic specifications.
  \item All our results are mechanized on top of the Rocq proof assistant using the Iris \citep{irisjournal} and Aneris \citep{DBLP:conf/esop/Krogh-Jespersen20} frameworks.
\end{itemize}
Below, we visualize exactly how the contribution of this paper fits into the larger picture; 
the work of \citet{SI-placeholder} invented separation logic specifications for transactional databases 
with weak isolation levels, and showed how to verify that an executable database implementation conforms to
these specifications. This paper shows that any database which conforms to the separation logic 
specifications also conforms to the state-based consistency model of \citet{Crooks17}
without any additional proof obligations.
As the state-based consistency model is proven equivalent to the dependency graph-based
model in \citet{Crooks17}, 
we get that any database implementation verified in separation logic 
is correct as defined in the seminal papers of the dependency graph-based model
\citet{Adya99} and \citet{Adya00}.
\begin{center}
\begin{tikzpicture}[
  box/.style={draw, scale=0.9, rectangle, minimum width=1.4cm, minimum height=1cm, align=center},
  arr/.style={-Stealth, thick, scale=0.9},
  textnode/.style={font=\small, align=center, scale=0.9}
]
\node[box] (db) {\textbf{Database} \\ \textbf{impl.}};
\node[box, right=2.17cm of db] (sep) {\textbf{Sep. log.}\\\textbf{spec.}};
\node[box, right=2.17cm of sep] (crooks) {\textbf{Crooks et al.}\\\textbf{state-based}\\\textbf{model}};
\node[box, right=2.17cm of crooks] (adya) {\textbf{Adya et al.}\\\textbf{dep. graph}\\\textbf{model}};
\draw[arr] (db) -- node[above] {\textit{conforms to}} node[below] {\shortstack{Mathiasen \\ et al. 2025}} (sep);
\draw[arr] (sep) -- node[above] {\textit{implies}} node[below] {This paper} (crooks);
\draw[arr] (crooks) -- node[above] {\textit{equivalent to}} node[below] {\shortstack{Crooks \\ et al. 2017}} (adya);
\end{tikzpicture}
\vspace{-2mm}
\end{center}
To create the implication from separation logic to the transactional consistency 
models, we reuse nothing but the specifications from \citet{SI-placeholder}, 
and the definition of the model from \citet{Crooks17} 
(which we must amend, \cf\ the first technical contribution listed above).
To prove the implication, we use the idea from \citet{free-theorems}
of capturing trace properties solely based on separation logic 
specifications. Thus far, the technique has only been used in a concurrent setting for one
illustrative example. What we do is much more challenging, as
we embed the state-based consistency model of \citet{Crooks17} as a trace property, 
which involves grouping several database operations, each of which has its own individual linearization point, into transactions.

\paragraph{Mechanization} 
All of our results have been mechanized in the Rocq proof assistant, \ie all 
the proofs of our theorems have been machine-checked.\footnote{The Coq proof assistant has recently been renamed to the Rocq prover.}
Together with the fact
that all the proofs of the separation logic specifications of 
\citet{SI-placeholder} have been machine-checked, and the fact that the soundness
of the Iris program logic with respect to the semantic model of program
execution has also been machine-checked, this means that \textbf{we obtain complete
formal theorems in Rocq connecting the database implementations to the state-based
transactional consistency models.}
The Rocq formalization accompanies the paper and consist of more than 13 thousand lines of proof code.


%% file: sections/overview.tex
\section{Technical Overview and Prior Work}
\label{sec:overview}
\paragraph{Database Implementations}
The database implementations we consider in this paper, 
are databases written in AnerisLang. 
This is a general purpose programming language 
and a subset of OCaml with fork-based concurrency, mutable references, 
and networking primitives for communicating over UDP network sockets
(\citet{rel-comm} builds a verified reliable communication library on top of UDP sockets in AnerisLang).
We have included a subset of AnerisLang below, for the full details about the language, we 
refer to its documentation \citep{aneris-documentation} which contains a detailed model 
of execution including the memory and network state.
\begin{align*}
  \expr \in \Expr \bnfdef{}
  & \val
    \ALT \var
    \ALT \expr_{1}~\expr_{2}
    \ALT \HLOp_1 \expr
    \ALT \expr_{1} \HLOp_2 \expr_{2}
    \ALT \If \expr_{1} then \expr_{2} \Else \expr_{3} 
    \ALT
    (\expr_{1}, \expr_{2})
    \ALT \Fst~\expr
    \ALT \Snd~\expr \\
  \ALT & 
    \Inl~\expr
    \ALT \Inr~\expr
    \ALT (\Match \expr with \Inl ~ \var_1 => \expr_1 | \Inr ~ \var_2 => \expr_2 end) \\
  \ALT & 
    \Fork{\expr}
    \ALT \refl ~ \expr 
    \ALT \deref\expr  
    \ALT \expr_{1}\gets\expr_{2}
    \ALT \eCAS{\expr_{1}}{\expr_{2}}{\expr_{3}} 
    \ALT \emakeaddress{\expr_{1}}{\expr_{2}}\\
  \ALT & 
   \egetaddress{\expr} 
   \ALT \enewsocket{}
   \ALT \esocketbind{\expr_{1}}{\expr_{2}} 
   \ALT \esendto{\expr_{1}}{\expr_{2}}{\expr_{3}} 
   \ALT \ereceivefrom{\expr}
   \ALT \dots
\end{align*}
We consider database implementations with operations for starting and committing transactions
together with operations for reading and writing to keys inside a transaction, that is
$\SIstartC{\Cst}$, $\SIrdC{\Cst}{\Key}$, $\SIwrC{\Cst}{\Key}{\val}$ and $\SIcommitC{\Cst}$, 
where $\Cst$ is a connection to the database acquired in an initialization phase.
We make no assumptions about database implementations other than that the
specific separation logic specifications from \citet{SI-placeholder} 
have been proven for these operations; we go over the specifications in \Cref{sec:3}. 
Database implementations may be local or distributed and use all sorts of fancy concurrency measures. 

\paragraph{Crooks Model} 
To verify the isolation levels of database implementations, 
we create a direct formal link to Crooks' state-based transactional consistency model \citep{Crooks17}. 
In the state-based model, isolation levels are formalized as predicates on executions \citep{Crooks17}.
An execution in this context is a sequence of state and transaction pairs, a state is mapping from 
keys of the database to the values they point to. For any pair of a state and a transaction $(\TState_i, \TTransaction_i)$
in an execution $\TExecution$, $\TState_i$ is the result of applying the write operations of 
$\TTransaction_i$ to the previous state $\TState_{i-1}$ in $\TExecution$
(we consider transactions as transitions between states).
For any given isolation levels $\TLevel$, we have a predicate $P_\TLevel$ on executions.
If for all the transactions a database produces, 
we can build an execution $\TExecution$ out of those transactions for which $P_\TLevel(\TExecution)$ is true, 
then the database implements isolation level $\TLevel$.
We formalize a more realistic version of the state-based model in \Cref{sec:2}, where we
remove a few simplifying assumptions from the original presentation to be
able to reason about program executions of database implementations.

\paragraph{Extraction of Transactions through Ghost Code Instrumentation}
To establish that a database implementation written in AnerisLang implements an isolation level 
$\TLevel$, we must extract all the transactions it can produce. Extracting transactions is not trivial, 
as we need to extract any combination of transactions a database possibly can produce by interacting 
with clients programs of the database, and we can not make any assumptions about the behavior of clients. 
The technique of \citet{free-theorems}, which so far only has been applied to small synthetic examples, 
utilizes ghost-code instrumentation to capture trace properties
(everything happens statically, and no code is ever run). 
To use this technique, we have taken AnerisLang and extended its program state with a global trace $\TtraceInstance$ of
so-called \emph{events}. There are two types of events, \emph{pre events} and \emph{post events}, 
marking the start and end of a database operation.
To create such events, AnerisLang is augmented with two ghost-code operations: $\Tfresh$ and $\Temit$.
Using these ghost-code operations, we instrument each of the database operations to emit a
pre event and a post event into the global trace $\TtraceInstance$, 
see \eg the instrumentation of the $\SIrd$ operation below:
\begin{align*}
  \Wrap{\SIrdC{\Cst}{\Key}} \eqdef \SIletin{\tau}{\SIfrC{(\Cst, \mathbf{R}, k)}} ~ \SIletin{ov}{\SIrdC{\Cst}{\Key}} ~ \SIemC{(\tau, \Cst, \mathbf{R}, k, ov)}; ov
\end{align*}
Above, $\SIfrC{(\Cst, \mathbf{R}, k)}$ emits the pre event $(\tau, \Cst, \mathbf{R}, k)$ into the global trace where 
$\mathbf{R}$ is a label signifying that it is a pre event of a $\SIrd$ operation, and $\tau$ is a fresh 
identifier that we use to tie a pair of pre- and post events together; 
$\SIemC{(\tau, \Cst, \mathbf{R}, k, ov)}$ emits the post event $(\tau, \Cst, \mathbf{R}, k, ov)$ 
into the global trace without generating a new identifier, this time we use the identifier $\tau$ from the pre event.
The fundamental insight that we utilize in this paper, 
is that a trace of pre events and post events is enough information to verify that a database implements 
an isolation levels. \textbf{This is the core technical contribution of the paper: we can create an 
invariant on a trace of events that lets us reason about transactions in Crooks model}. 
The invariant is shown in \Cref{fig:extraction_method}, which in essence is our extraction method for extracting any potential 
transactions from a database implementation. 
We explain the details of the invariant, and the formulation of our main theorem (\Cref{def:adequacy}),
in \Cref{sec:4}. Briefly, 
level 1 is our trace of pre- and post-events from the ghost code instrumentation, level 2 
are linearization points created as part of the proof for each pair of pre- and post-events, 
level 3 is grouping of the linearization points into transactions that we can reason 
about in the state-based model at the top level. At the top level, we  
build an execution $\TExecution$ and show that $P_\TLevel(\TExecution)$ 
holds for the isolation level $\TLevel$ that the database implements. 
\begin{figure}[!h]
  \centering
  \begin{tikzpicture}[>=latex, scale=0.8, every node/.style={scale=0.75}, node/.style={circle, draw=black, very thick, minimum size=10mm, align=left}]
    \node[node] at (-3.3,3.1) (zero)                  {$S_0$};
    \node[node]               (fst)   [right=of zero] {$S_1$};
    \node[node]               (snd)   [right=of fst] {$S_{n-1}$};
    \node[node]               (rd)   [right=of snd] {$S_{n}$};

    \draw[->] (zero.east) -- (fst.west) node [pos=0.5,above] {$T_l$};;
    \draw[->] (snd.east) -- (rd.west) node [pos=0.5,above] {$T_m$};;
    \node at (0,3.0) {\dots \dots};

    \matrix[mymats=white,anchor=north]
    at (0,0.4) 
    (mat2)
    {
    \mathit{lp}_1 & \mathit{lp}_2 & \dots & \mathit{lp}_k & \dots & \dots & \dots \\
    };
    \matrix[mymats=white,anchor=north]
    at (0,-1.3) 
    (mat1)
    {
    e_1 & e_2 & \dots & e_i &  & \dots & \dots & \dots & \dots & \dots & e_j & \dots & \dots & \dots \\
    };
    \node[below=15pt of mat1]
    (cella-level0) {$\SIletin{\tau}{\SIfrC{(\Cst, \mathbf{R}, k)}} ~ \SIletin{ov}{\SIrdC{\Cst}{\Key}} ~ \SIemC{(\tau, \Cst, \mathbf{R}, k, ov)}; ov$};
    \node[above=20.5pt of mat2]
    (cella-level3) {$\TTransactionSetOf = \{\TTransaction_1, \TTransaction_2, \TTransaction_3, \dots \}$};
    \node at (-6.80,3.1) {(Model)};
    \node at (-6.75,1.45) {(Level 3)};
    \node at (-6.75,-0.1) {(Level 2)};
    \node at (-6.75,-1.75) {(Level 1)};
    \node at (-6.75,-3.1) {(Level 0)};
    \begin{scope}[shorten <= -2pt]
    \draw[*->] (mat1-1-4.north) -- (mat2-1-4.south);
    \draw[*->] (mat1-1-11.north) -- (mat2-1-4.south);
    \draw[-{Implies},double, line width=0.25mm] (0,1.925) -- (0,2.575);
    \draw[-{Implies},double, line width=0.25mm] (0,0.55) -- (0,1.15);
    \draw[->] (2.55,-2.8) -- (1.85,-2.1);
    \draw[->] (-2.55,-2.8) -- (-1.85,-2.1);
    \end{scope}
    \draw[densely dotted] (-4.55, -3.3) -- (-1.60, -3.3);
    \draw[black!70] (-3,-3.3) node[anchor = north] {\begin{minipage}{13em}\centering Produce a fresh id $\tau$ for the operation and emit the pre event\end{minipage}};
    \draw[densely dotted] (1.6, -3.3) -- (4.5, -3.3);
    \draw[black!70] (3,-3.3) node[anchor = north] {\begin{minipage}{10em}\centering Emit the post event\end{minipage}};
  \end{tikzpicture}
  \caption{Extraction method exemplified with the $\mathsf{read}$ operation.} 
  \label{fig:extraction_method}
\end{figure}
\paragraph{Proof Obligations in Separation Logic} 
To show that the invariant holds for all possible usages of a database implementation,
our proof obligation becomes to prove that 
emitting the pre event and emitting the post 
event into the global trace does not brake the invariant for each of the database operations;
where we get to assume that the database operations satisfy their separation logic specifications. 
Thus, in summary, we are proving isolation levels from the structure of separation logic specifications alone.
In \Cref{sec:5}, we present the proof structure in detail and carry out a detailed proof sketch 
for snapshot isolation in \Cref{sec:6}. To explain what we mean by only relying on
the structure of separation logic specifications, 
consider the specifications for a very simple database below: 
\begin{align*}
  \anhoare{\mathsf{idle}}{\SIstartC{\Cst}}{\mathsf{active}}{}{} \hspace{2.6mm}
  \anhoare{\mathsf{active}}{\SIcommitC{\Cst}}{\mathsf{idle}}{}{} \hspace{2.6mm} 
  \anhoare{\mathsf{active}}{\SIwrC{\Cst}{\Key}{\val}}{\mathsf{active}}{}{} \hspace{2.6mm}
  \anhoare{\mathsf{active}}{\SIrdC{\Cst}{\Key}}{\mathsf{active}}{}{}
\end{align*}
The preconditions and postconditions of these specifications only assert whether there
is an active transaction or not.
But, from the structure of these specifications, we can, for instance, deduce 
that the $\SIrd$ and $\SIwr$ operations never can be used before the $\SIstart$ operation, 
and that a transaction can not be committed twice.
The specifications of \citet{SI-placeholder} are, of course, much more involved and they contain enough information 
to let us prove the invariant on the global trace that encapsulates isolation levels in Crooks' state-based model.
\paragraph{Comparison with other Proof-Oriented Approaches for Non-Executable Code} 
Outside the world of separation logic, \citet{raad2018semanticssnapshotisolation} have created 
two high-level non-executable snapshot isolation reference implementations, as local software systems on the 
release/acquire memory model, 
and made a pen and paper proof in their framework that the reference implementations 
are correct regarding the model of snapshot isolation of \citet{analysing_si} (this model is based on dependency graphs 
and is a variant of \citet{Adya99} and \citet{Adya00}).
Similar work has also been done by \citet{rpsi} 
for a variant of snapshot isolation called \emph{parallel} snapshot isolation.
Unlike the approach in this paper, other than being non-mechanized and non-executable, this line of work 
does not address verifying executable clients against the reference implementations. Neither does it come with the modularity 
of separation logic to be usable in conjunction with other software libraries, which makes it unfeasible to use for 
developments of significant size. 
It also remains unclear how one would verify other implementations than the reference implementations 
without having to undergo the same proof effort.
The work of \citet{doherty2013towards} (and comparable approaches like \citet{lesani2012framework} and \citet{schellhorn2025verification}) 
has similar limitations: Transactional guarantees are specified as an I/O automaton, 
and they prove that a transactional memory implementation, also written as an I/O automaton, is a \emph{forward simulation} of the specification. 
Analogously, \citet{Attiya2013} (and the successor work in \citet{attiya2017characterizing}) create \emph{observable refinements} between transactional memory implementations written in a 
non-executable programming language. 
In \cite{lesani2022c4}, a different approach is taken based on \emph{interaction trees} \citep{ITrees}; 
the paper presents a mechanized framework for verifying classical concurrent libraries 
with linearizability \citep{linearizability+strict-serializability} alongside transactional libraries. 
The specifications of the transactional libraries are restricted to serializability formulated as linearizability of transactions 
(informally, serializability is the guarantee that transactions appear to be executed one after another), 
and the approach is applied to \emph{Transactional Mutex Locks} \citep{dalessandro2010transactional}.

In \Cref{fig:comparison-table}, we compare the contributions of this paper with the other proof-oriented approaches for showing the absence of bugs in database or software 
transactional memory (STM) implementations. We compare across six axes: 
(1) \emph{Executeable code} (is the verified database/STM executable?),
(2) \emph{Generalizes over clients} (does the verification prove that all potential transactional workloads created by clients 
of the database/STM break no guarantees?),
(3) \emph{Generalizes over impls.} (do all future verifications have to undergo the same proof effort?),
(4) \emph{Abstract spec.} (is there a formal connection to an abstract transactional consistency model?),
(5) \emph{Modular} (is it a modular proof that can easily be combined with other verified efforts?) and
(6) \emph{Mechanized} (is the proof mechanized in a proof assistant?).
As seen in \Cref{fig:comparison-table}, this paper is the only method that is partially able to generalize over 
database implementations.
Our approach is agnostic to the concrete implementation and only assumes that an implementation is verified against the separation logics specifications 
of \citet{SI-placeholder}, which make no assumptions about implementation details,
or whether the system consists of one or more servers.
Our results establish a connection from the separation logic specifications to the transactional consistency model 
of \citet{Crooks17} stated as a free theorem that does not have to be reproved for other implementations as long as they satisfy the specifications of \citet{SI-placeholder}.
\begin{figure}[h]
  \vspace{-3mm}
  \begin{center}
  \small
  \setlength{\tabcolsep}{3.65pt}
  \begin{tabular}{|c|c|c|c|c|c|c|} 
  \hline
  & \makecell{Executable \\ code} & \makecell{Generalizes \\ over clients} & \makecell{Generalizes \\ over impls.} & \makecell{Abstract\\ spec.} & \makecell{Modular} & \makecell{Mechanized} \\ 
  \hline
  \citet{Attiya2013} & \Circle & \Circle & \Circle  & \Circle & \Circle & \Circle \\ 
  \hline
  \citet{doherty2013towards} & \Circle & \Circle & \Circle  & \Circle & \Circle & \CIRCLE \\ 
  \hline
  \citet{raad2018semanticssnapshotisolation} & \Circle & \CIRCLE & \Circle & \CIRCLE & \Circle & \Circle \\ 
  \hline
  \citet{Soethout2021} & \Circle & \Circle & \Circle  & \CIRCLE & \Circle & \CIRCLE \\ 
  \hline
  \citet{lesani2022c4} & \Circle & \CIRCLE & \Circle  & \Circle & \CIRCLE & \CIRCLE \\ 
  \hline
  \citet{vMVCC} & \CIRCLE & \Circle & \Circle & \Circle & \CIRCLE & \CIRCLE \\ 
  \hline
  \citet{SI-placeholder} & \CIRCLE & \Circle & \Circle & \Circle & \CIRCLE & \CIRCLE \\ 
  \hline
  \textbf{This paper} & \CIRCLE & \CIRCLE & \LEFTcircle   & \CIRCLE & \CIRCLE & \CIRCLE \\ 
  \hline
  \end{tabular}
  \end{center}
  \caption{Comparison of prior work about verification of database and software transactional memory implementations.} 
  \label{fig:comparison-table}
  \vspace{-3mm}
\end{figure}


%% file: sections/model.tex
\section{A Realistic State-based Model}
\label{sec:2}

The original formulation of the state-based model, presented in \citet{Crooks17}, makes a number 
of simplifying assumptions to ease its presentation. While this is a natural thing to do, 
we need to remove some of these assumptions as they are overly restrictive for capturing the behavior 
of clients interacting with actual database implementations. 
Concretely, the assumptions we remove are:
\begin{itemize}
  \item Every value written (as a part of a $\mathsf{write}$ operation) is unique among all other values written.
  \item There is at most one $\mathsf{write}$ operation per transaction for each key.
\end{itemize}
While these simplifying assumptions can seem somewhat minor, removing them requires some definitions 
to be reformulated (we will show how shortly). 
Contrary to the above, there is one assumption we do
not remove: 
\begin{itemize}
  \item All transactions read from their most recent write (if there is one).
\end{itemize}
It is important to understand that we do still prove 
that transactions from a database implementation verified with our approach read from their most recent write,
but we prove it prior to reasoning at the transactional model level. 
We provide more details on this when we explain our method for extracting 
transactions from executable code in \Cref{sec:4}. 
In the rest of this section, we formalize the state-based model, and along the way we explain the 
changes we have made to remove the above-mentioned assumptions. Afterwards, we present a formal definition
of what it means for a database implementation to implement an isolation level in this model.
\paragraph{Transactions} 
Consider transaction 
operations on a set of keys and values.
We work with the transaction operations $\mathsf{read}$, $\mathsf{write}$ and $\mathsf{commit}$.
$\TR(\mathit{id}, k, ov)$
denotes a $\mathsf{read}$ operation reading the optional value $ov$ from key $k$ and having identifier $\mathit{id}$, 
similarly $\TW(\mathit{id}, k, v)$ denotes a $\mathsf{write}$ operation writing value $v$ to key $k$
and having identifier $id$. We assume each operation has a unique identifier.
The $\mathsf{commit}$ operation is
denoted $\TC(\mathit{id}, b)$ with the boolean $b$ indicating whether the transaction commits or aborts.
We define a transaction as a set of operations together with a total order on the operations representing
the program order: 
\begin{definition}[Transaction]\label{def:transaction} 
  A \emph{transaction} $\TTransaction$ is a tuple of 
  $(\TTransactionSet{\TTransaction}, \TTransactionOrder{\TTransaction})$
  where $\TTransactionSet{\TTransaction}$ is a set of operations
  and the program order $\TTransactionOrder{\TTransaction}$ is a total order
  on $\TTransactionSet{\TTransaction}$. 
  $\TTransactionSet{\TTransaction}$ must contain exactly one $\mathsf{commit}$ operation, and 
  it must be the greatest element in $\TTransactionOrder{\TTransaction}$.
\end{definition}
We remark that removing the simplifying assumptions listed in the beginning of this section 
is not enough to allow for all realistic client behavior: 
because transactions are defined as having a \emph{set} of operations, 
we have added the (unique) identifier component to transaction operations, to allow 
for multiple operations having the same keys and values in a single transaction. 
Transaction operations usually do not have an identifier component in other transactional models. 
In \Cref{sec:4}, we show how unique identifiers for transaction operations are created.
\begin{wrapfigure}{R}{0.64\textwidth}
\begin{minipage}{0.64\textwidth}
  \vspace{-3mm}
  \centering
  \begin{tikzpicture}[>=latex, scale=0.8, every node/.style={scale=0.8}, node/.style={circle, draw=black, very thick, minimum size=15mm, align=left},]    
    \draw[->,black,line width=1] (0,0) -- (10,0) node[right] {time};
    \node (0) at (0, 1) {$T_1:$};
    \node (1) [right=of 0] {$\TR(1, x, x_0)$};
    \node (2) [right=of 1] {$\TW(1, y, y_1)$};
    \node (3) [right=of 2] {$\TC(1, \TRUE)$};

    \node (10) at (0, 0.5) {$T_2:$};
    \node (11) [right=of 10] {$\TR(2, y, y_0)$};
    \node (12) [right=of 11] {$\TW(2, x, x_1)$};
    \node (13) [right=of 12] {$\TC(2, \TRUE)$};
  \end{tikzpicture}
  \caption{The \emph{write skew} example.} 
  \label{fig:SI_skew_example}
\end{minipage}
\end{wrapfigure} 
\paragraph{Isolation levels} 
Having formalized what a transaction is, the next step is to define transactional guarantees, \ie isolation levels. 
We can think of an isolation level as all the executions of a transactional workload we will
allow. 
Throughout this paper, we primarily focus on one particular isolation level, namely snapshot isolation, to explain our ideas. 
Snapshot isolation, together with the two other isolation levels we treat in this paper, read uncommitted and read committed, 
fall into the category of weak isolation levels.
Strong isolation, in the form of serializability, is the isolation level where each transaction gets to execute 
one at a time with no overlapping. Snapshot isolation is more relaxed than serializability and hence weaker. 
The \emph{write skew} example is commonly used to explain how snapshot isolation differs from serializability. 
In the \emph{write skew} example, we observe two concurrent transactions $T_1$ and $T_2$ that both perform 
a $\mathsf{read}$ operation followed by a $\mathsf{write}$ operation before they commit. 
\noindent As seen in \Cref{fig:SI_skew_example}, transaction $T_1$ reads from the key $x$ and writes to the key $y$, whereas it is the opposite for transaction $T_2$.
Both transactions read from the initial snapshot of the database, namely the snapshot where all keys 
have not been written to yet and therefore return $\None$ when read from.
For notational convenience, we often omit the optional part from the result of a $\mathsf{read}$ operation; 
instead, \eg for the key $x$, we write $x_0$ for the initial version 
corresponding to $\None$. 
We denote subsequent versions of a key with an index larger than zero.  
Because neither of the two transactions in the \emph{write skew} example see the effects of the other, this example 
cannot come from an execution where transactions execute one at a time. 
Therefore, it is not allowed under serializability.
But it is allowed under snapshot isolation. 
The principle of snapshot isolation is that \emph{all transactions read from a snapshot of the database, 
and any transaction is allowed to commit when it has no write conflicts with other concurrent transactions} \citep{orig-SI}.
By a write conflict, we mean two transactions that write to the same key. 
Notice that snapshot isolation, and any other isolation level, always have the baseline guarantee,
mentioned in the beginning of this section, that transactions read their own writes.

While an example such as the \emph{write skew} can help us strengthen our intuition and explain the concept of snapshot isolation, 
we need a transactional model to formally decide whether an arbitrary execution of a transactional workload is considered safe. 
In the state-based model, we use a labeled state transition system and, on top of that, we define the notion of \emph{executions}. 

\begin{definition}[States and Executions]\label{def:state}\label{def:transition}\label{def:execution}
  A \emph{state} $\TState$ is a mapping from keys to values.
  We write $\TState[k]$ for the value of key $k$ in state $\TState$.
  We define a \emph{transition relation} for transactions which we write as $\TTransition{\TState}{\TTransaction}{\TState'}$ when the state $\TState'$ results from running (the $\mathsf{write}$ operations of) transaction $\TTransaction$ starting from state $\TState$.
  An \emph{execution} $\TExecution$ 
  is a sequence of state-transaction pairs, $(\TState_1, \TTransaction_1), (\TState_2, \TTransaction_2), \dots, (\TState_n, \TTransaction_n)$ such that $\forall i < n, \TTransition{\TState_i}{\TTransaction}{\TState_{i+1}}$.
  We write $\ListLookup{\TExecution}{i}$ for the pair $(\TState_i, \TTransaction_i)$ in the execution sequence.
\end{definition} 
\vspace{2mm}
\noindent To express that an execution $\TExecution$ is based upon a set $\TTransactionSetOf$ of transactions, 
we write $\ExecOf{\TTransactionSetOf}{\TExecution}$,  
which formally expresses that no transactions occur more than once in an execution $\TExecution$ and 
$\forall \TTransaction, (\exists i, ~ \ListLookup{\TExecution}{i} = (\_,\TTransaction)) \Leftrightarrow \TTransaction \in \TTransactionSetOf$.

For the \emph{write skew} example in \Cref{fig:SI_skew_example}, we show its two possible executions in \Cref{fig:SI_model_example}.
\begin{figure}[h]
  \vspace{-3mm}
  \centering
  \begin{tikzpicture}[>=latex, scale=0.8, every node/.style={scale=0.8}, node/.style={circle, draw=black, very thick, minimum size=12mm, align=left},]
  
    \node[node, label={$s_0$}]      (zero)                      {x: $x_0$ \\ y: $y_0$};
    \node[node, label={$s_1$}]      (fst)       [right=of zero] {x: $x_1$ \\ y: $y_0$};
    \node[node, label={$s_2$}]      (snd)       [right=of fst]  {x: $x_1$ \\ y: $y_1$};
    
    \draw[->] (zero.east) -- (fst.west) node [pos=0.5,above] {$T_1$};;
    \draw[->] (fst.east) -- (snd.west) node [pos=0.5,above] {$T_2$};;
  \end{tikzpicture} 
  \hspace{10mm}
  \begin{tikzpicture}[>=latex, scale=0.8, every node/.style={scale=0.8}, node/.style={circle, draw=black, very thick, minimum size=12mm, align=left},]
  
    \node[node, label={$s_0$}]      (zero)                      {x: $x_0$ \\ y: $y_0$};
    \node[node, label={$s_1$}]      (fst)       [right=of zero] {x: $x_0$ \\ y: $y_1$};
    \node[node, label={$s_2$}]      (snd)       [right=of fst]  {x: $x_1$ \\ y: $y_1$};
    
    \draw[->] (zero.east) -- (fst.west) node [pos=0.5,above] {$T_2$};
    \draw[->] (fst.east) -- (snd.west) node [pos=0.5,above] {$T_1$};
  \end{tikzpicture}
  \caption{The \emph{write skew} example in the state-based model with its two possible executions.} 
  \label{fig:SI_model_example}
  \vspace{1mm}
\end{figure}

\noindent Both executions start from the initial state, here we represent the initial values
as version zero, \eg $x_0$, but they differ in which transaction applies its $\mathsf{write}$ first.
The fact that snapshot isolation allows the \emph{write skew} example means that it contains both executions 
in \Cref{fig:SI_model_example}. We define an \emph{isolation level} to be 
a set of executions.

\paragraph{Commit tests}
Whether an execution is in an isolation level
$\TLevel$ is determined by a \emph{commit test} \citep{Crooks17}, \ie a predicate on executions and transactions, written as $\TComTest{\TLevel}{\TTransaction}{\TExecution}$. 
An execution $\TExecution$, 
for the subset of committed transactions in $\TTransactionSetOf$ (\ie assuming $\ExecOf{\ComTrans{\TTransactionSetOf}}{\TExecution}$ holds 
where $\ComTrans{\TTransactionSetOf}$ is the subset of committed transactions in $\TTransactionSetOf$),
is in an isolation level $\TLevel$
when the test holds for all the commited transactions in 
$\TTransactionSetOf$, \ie
$\forall \TTransaction \in \ComTrans{\TTransactionSetOf}, ~ 
\TComTest{\TLevel}{\TTransaction}{\TExecution}$. 
We define the commit test for the weak isolation levels read uncommitted, read committed and snapshot isolation using a number of predicates.
\Cref{fig:state_based_levels} summarises the definition of the commits tests.

Starting with snapshot isolation, 
we write $\THasWrite{\TTransaction}{k}$ to say that the transaction $\TTransaction$ contains at least one $\mathsf{write}$ operation with key $k$, 
and write $\TNoPriorWrite{\TTransaction}{op}$ to state that in transaction $\TTransaction$ there is no $\mathsf{write}$ operation on the same key as 
the operation $op$ prior to $op$ itself.
Using these definitions we define the predicates $\mathsf{Complete}$ and $\mathsf{NoConf}$ in \Cref{def:predicates_si} which are used 
in the commit test of snapshot isolation. The $\TComplete{\TExecution}{(\TTransactionSet{\TTransaction}, \TTransactionOrder{\TTransaction})}{\TState}{i}$ predicate 
asserts that a transaction $(\TTransactionSet{\TTransaction}, \TTransactionOrder{\TTransaction})$ has a snapshot in the past of the execution $\TExecution$. 
The snapshot is represented by the state $s$, while we use $i$ as the index of $s$ in $\TExecution$.
The $\TNoConf{\TExecution}{\TTransaction}{i}$ predicate asserts that the transaction $\TTransaction$, with snapshot at index $i$ in 
$\TExecution$, has no write conflict with any concurrent transaction. 
Here a concurrent transaction is one that has committed in between the start time 
and the commit time of transaction $\TTransaction$.

\begin{definition}\label{def:predicates_si}
  We define the predicates $\mathsf{Complete}$, and $\mathsf{NoConf}$ as follows:
  {
  \begin{align*}
    \intertext{\normalsize For a transaction $(\TTransactionSet{\TTransaction}, \TTransactionOrder{\TTransaction})$, 
    execution $\TExecution$, state $\TState$, and index $i$}
    \TComplete{\TExecution}{(\TTransactionSet{\TTransaction}, \TTransactionOrder{\TTransaction})}{\TState}{i} \eqdef & ~
    \forall j ~ \mathit{id} ~ k ~ ov, ~
    \ListLookup{\TExecution}{j} = (\_, (\TTransactionSet{\TTransaction}, \TTransactionOrder{\TTransaction})) \Rightarrow\tag{Complete} \label{def:complete}
    \\ & ~\TR(\mathit{id}, k, ov) \in \TTransactionSet{\TTransaction} \Rightarrow
    \TNoPriorWrite{(\TTransactionSet{\TTransaction}, \TTransactionOrder{\TTransaction})}{\TR(\mathit{id}, k, ov)} \Rightarrow \\
    & ~ i < j \land \ListLookup{\TState}{k} = ov.
    \intertext{\normalsize For a transaction $\TTransaction$, execution $\TExecution$, and index $i$}
    \TNoConf{\TExecution}{\TTransaction}{i} \eqdef & ~ 
    \forall j ~ i' ~ \TTransaction' ~ k, ~
    \ListLookup{\TExecution}{j} = (\_, \TTransaction) \Rightarrow \tag{NoConf} \label{def:no_conf}
    \ListLookup{\TExecution}{i'} = (\_, \TTransaction') \Rightarrow \\
    & ~ i < i' < j \Rightarrow 
    \THasWrite{\TTransaction}{k} \Rightarrow
    \neg \THasWrite{\TTransaction'}{k}.
  \end{align*}
  }
\end{definition}
\noindent The commit test of snapshot isolation in \Cref{fig:state_based_levels} 
then states that all transactions 
in a valid execution $\TExecution$ must have a snapshot state for which the 
$\mathsf{Complete}$ and $\mathsf{NoConf}$ predicates hold.

The read committed isolation level is weaker than snapshot isolation. 
It guarantees that any $\mathsf{read}$ by a committed transaction will be from its own previous $\mathsf{write}$, 
and if no previous $\mathsf{write}$ exists, it will be from another committed transaction. 
Notice it does not guarantee that transactions read from a stable snapshot, just that 
reading happens from \emph{some} committed transaction.
To define the commit test of read committed, we use the predicates $\mathsf{ReadState}$ and $\mathsf{Preread}$
which we define in \Cref{def:predicates_rc}.
The $\TReadStates{\TExecution}{i}{k}{v}{\TState}$ predicate asserts that the state $\TState$ has an index $j$ 
in the execution $\TExecution$ which is smaller than the argument $i$, and that key $k$ points to value $v$ in the state.
The $\TPreread{(\TTransactionSet{\TTransaction}, \TTransactionOrder{\TTransaction})}{\TExecution}$ 
predicate says that all the $\mathsf{read}$ operations in a transaction 
$(\TTransactionSet{\TTransaction}, \TTransactionOrder{\TTransaction})$, with no 
prior writes to the same key, must have a read-state, as defined by the $\mathsf{ReadState}$ predicate, 
in the execution $\TExecution$. The commit test of read committed then 
states in \Cref{fig:state_based_levels} that all transactions in 
a valid execution $\TExecution$ must satisfy the $\mathsf{Preread}$ predicate.

\begin{definition}\label{def:predicates_rc}
  We define the predicates $\mathsf{ReadState}$ and $\mathsf{Preread}$ as follows:
  {
  \begin{align*}
    \intertext{\normalsize For an execution $\TExecution$, index $i$, key $k$, value $v$ and state $\TState$}
    \TReadStates{\TExecution}{i}{k}{v}{\TState} \eqdef{} & \exists j, j < i \land 
    \ListLookup{\TExecution}{j} = (\TState, \_) \land \ListLookup{\TState}{k} = v.
    \tag{ReadState} \label{def:read_states}
    \intertext{\normalsize For a transaction $(\TTransactionSet{\TTransaction}, \TTransactionOrder{\TTransaction})$ and execution $\TExecution$}
    \TPreread{(\TTransactionSet{\TTransaction}, \TTransactionOrder{\TTransaction})}{\TExecution} \eqdef{} & ~ \forall i ~ \mathit{id} ~ k ~ v, ~ 
    \ListLookup{\TExecution}{i} = (\_, (\TTransactionSet{\TTransaction}, \TTransactionOrder{\TTransaction})) \Rightarrow \tag{Preread} \label{def:preread}
    \TR(\mathit{id}, k, v) \in \TTransactionSet{\TTransaction} \Rightarrow \\
    & ~ \TNoPriorWrite{(\TTransactionSet{\TTransaction}, \TTransactionOrder{\TTransaction})}{\TR(\mathit{id}, k, v)} \Rightarrow
    \exists \TState, \TReadStates{\TExecution}{i}{k}{v}{\TState}.
  \end{align*}
  }
\end{definition}

\noindent The last isolation level, read uncommitted, can be understood as a baseline level in which we do not 
get any additional guarantees than our baseline transaction assumptions (transactions read from their 
most recent write). Thus, the commit test for read uncommitted is simply $\TRUE$.
The definitions of the read uncommitted and read committed levels are equivalent to the ones 
stated in the original formulation of the state-based model \citep{Crooks17}. The definition of the snapshot isolation level
has been changed as a result of removing the simplifying assumptions listed in the beginning of this section.
Specifically, the previous formulation of the $\mathsf{NoConf}$ predicate did not take into account that the same value can be written multiple 
times.
\begin{figure}[H]
  \vspace{-4mm}
  \small
  \begin{center}
    \begin{tabular}{ |c|c| } 
    \hline
    \textbf{Isolation Level} & \textbf{Commit Test} \\
    \hline
    Read Uncommitted \big($\TComTestRU{\TTransaction}{\TExecution}$\big) & 
    $\TRUE$ \\ 
    \hline
    Read Committed \big($\TComTestRC{\TTransaction}{\TExecution}$\big) & 
    $\TPreread{\TTransaction}{\TExecution}$ \\ 
    \hline
    Snapshot Isolation \big($\TComTestSI{\TTransaction}{\TExecution}$\big) & 
    $\exists \TState ~ i, ~ 
    \ListLookup{\TExecution}{i} = (\TState, \TTransaction) ~ \land$ 
    $\TComplete{\TExecution}{\TTransaction}{\TState}{i} ~ \land$
    $\TNoConf{\TExecution}{\TTransaction}{\TState}$ \\ 
    \hline
    \end{tabular}
  \end{center}
  \caption{Isolation levels defined using state-based commit tests.}
  \label{fig:state_based_levels}
\end{figure}
\paragraph{Correctness of implementation} In this paper, our goal is to show correctness of 
database implementations. The correctness criterion we use is a straightforward extension of 
the correctness criterion used by \citet{Crooks17} to prove equivalence between 
their state-based model and the dependency graph model.

\begin{definition}[Implementation Correctness]\label{def:implements}
  A database implements an isolation level $\TLevel$ when, for any set of 
  transactions $\TTransactionSetOf$ it produces, 
  there exists an execution $\TExecution$ such that \\ $\ExecOf{\ComTrans{\TTransactionSetOf}}{\TExecution}$ holds, and we have $\forall \TTransaction \in \ComTrans{\TTransactionSetOf}, ~ \TComTest{\TLevel}{\TTransaction}{\TExecution}$.
\end{definition}

\noindent We recognize that the word "produces" leaves some ambiguity in the definition.
We make this notion precise in \Cref{sec:4}.
Futher, a database in this paper is a program that implements the API presented in \Cref{sec:3.1}.
The takeaway from this section is that we have now defined correctness at the
transactional model level. 


%% file: sections/spec.tex
\section{Specifying Transactional Consistency in Separation Logic}
\label{sec:3}

In this section, we recall the formulation of transactional consistency for program executions in separation logic from \citet{SI-placeholder}.
The paper presents separation logic specifications for the three weak isolation levels read uncommitted, read committed 
and snapshot isolation. 
The works that we build upon
\citep{SI-placeholder,DBLP:conf/esop/Krogh-Jespersen20,free-theorems} are all
based on the Iris program logic framework \citep{irisjournal}, a state-of-the-art
framework for specifying and proving correctness of programs using separation
(Hoare) logic \citep{DBLP:journals/cacm/Hoare69,1029817}.
We will present the (minor) details of separation logic (and the Iris
framework specifically) that are necessary for understanding this paper on
per-need basis.

In this paper, we focus on snapshot isolation to explain our methodology for connecting abstract 
transactional consistency models to program executions. We emphasize again that our approach has been formalized for all the isolation levels 
in \citet{SI-placeholder}, and refer to the Rocq formalization for the details about the remaining isolation levels.
In \Cref{sec:3.1}, we explain the structure of the specification for weak isolation levels with a focus on parametricity of resources.
This is a property we rely on, in \Cref{sec:4} and \Cref{sec:5}, to achieve our connection to the state-based model with no 
observable changes in the specifications. 
After this, we go into detail with the per-operation specifications for snapshot isolation in \Cref{sec:3.2}
and explain how they expose linearization points through atomic specifications ---
another crucial property for establishing our connection to the state-based model.
For readers unfamiliar with separation logic, we show how the per-operation specifications can be used to prove safety of small examples, 
exemplified with the \emph{write skew} example, in \Cref{sec:example}.
Proving safety of small examples that capture interesting properties about isolation levels is, 
until the contribution of this paper, the best assurance we have that separation logic specifications 
capture isolation levels correctly.  

\subsection{Specification Structure}
\label{sec:3.1}

The snapshot isolation specification ($\Spec_{\mathit{SI}}$) is shown in \Cref{fig:specs_si}.
(Our presentation of the specification omits resources and specifications related to initialization of the database. 
These elements are not of importance to the proofs of this paper, because the correctness criteria (\Cref{def:implements}) 
 we prove is only about what is observable to clients of the database.) 
The specification for read uncommitted and read committed have the same structure. 
The specification takes as argument a transactional library $\mathit{lib}$. 
This is the client API of the database implementation with the method signatures 
$\mathsf{init} : \mathsf{ip} \rightarrow \mathsf{conn}$,
$\mathsf{start} : \mathsf{conn} \rightarrow \mathsf{unit}$,
$\mathsf{read} : \mathsf{conn} \times \mathsf{key} \rightarrow \mathsf{option ~
  val} $,
$\mathsf{write} : \mathsf{conn} \times \mathsf{key} \times \mathsf{val}
\rightarrow \mathsf{unit}$ and
$\mathsf{commit} : \mathsf{conn} \rightarrow \mathsf{bool}$, where
$\mathsf{conn}$ is the type for connections.
Any implementation verified against the specification $\Spec_{\mathit{SI}}$ must have these signatures, as this is what the 
specification assumes. 
The specification starts by stating the existence of a number of resources. 
These are all Iris predicates (as signified by the type $\ipropType$, which is the type of Iris propositions).
The first resource, $\ConnectionStateSymb: \SIconnType ~ \times ~ \statusType \rightarrow \ipropType$, 
is used to keep track of the state a client is in. A client can create this resource using the 
operation $\mathsf{init}$, and it will show  whether a client has an actively running transaction, 
on its connection of type $\SIconnType$, using a $\statusType$ type argument, which we will see examples of shortly.
The two resources, $\mapstoMem{\_}{\_}: \SIkeyType ~ \times ~ \histType \rightarrow \ipropType$ and 
$\mapstoCache{\_}{\_}{\_}: \SIconnType ~ \times ~ \SIkeyType ~ \times ~ \valType \rightarrow \ipropType$ are so called 
\emph{points-to resources}; they both have a $\SIkeyType$ arguments which points to either a history or a value. 
We call $\mapstoCache{\_}{\_}{\_}: \SIconnType ~ \times ~ \SIkeyType ~ \times ~ \valType \rightarrow \ipropType$ a \emph{local} points-to resource. 
It is used for describing the starting snapshot that a transaction reads from, thus it also takes a connection as argument.
We name $\mapstoMem{\_}{\_}: \SIkeyType ~ \times ~ \histType \rightarrow \ipropType$ the 
\emph{global} points-to resource. It describes the state of the database, where a history 
is a list of values holding all the values that has been written in the past. 
The resource $\KeyUpdStatusSymb: \SIconnType \times \SIkeyType \times \SIboolType \rightarrow \ipropType$
is used to keep track of whether a transaction made any updates to its local points-to resources.
This is essential information to have when resolving whether a transaction has any write conflicts
and safely can commit.
The specification $\Spec_{\mathit{SI}}$ provides $\Sep_{\Key\, \in\, \Keys} \SI\mapstoMem{\Key}{\HistEmpty}$
which are initial \emph{global} points-to resources representing the state in which no value has been written 
to any key yet. The symbol $\Sep$ is the iterated separating conjunction. 
The separating $\ast$ conjunction is much like an ordinary conjunction, except that any exclusively owned 
resources can only be used to satisfy one side of the conjunction, \ie it enforces a separation of resources. 

Following the listing of all resources comes the per-operation separation logic specifications in the form of Hoare triples
for each operation in the client API which the argument $\Spec_{\mathit{SI}}$ is an implementation of.
Concretely, we have a Hoare triple specification for the operations 
$\mathsf{init}$, $\mathsf{start}$, $\mathsf{read}$,
$\mathsf{write}$ and $\mathsf{commit}$. 
The operations $\mathsf{read}$, $\mathsf{write}$ and $\mathsf{commit}$, 
correspond to the ones we saw in our formalization of the state-based model in \Cref{sec:2}, 
while the $\mathsf{start}$ operation starts new transactions, and the $\mathsf{init}$ operation initializes a client connection with the transactional library.
\begin{figure}[htbp]
  \centering
  \vspace{-4mm}
  \begin{align*}
    \Spec_{\mathit{SI}}(\mathit{lib}) \eqdef ~ 
    \exists{} & ~ \ConnectionStateSymb: \SIconnType \times \statusType \rightarrow \ipropType, ~
    \mapstoMem{\_}{\_}: \SIkeyType \times \histType \rightarrow \ipropType, \\
    & \mapstoCache{\_}{\_}{\_}: \SIconnType \times \SIkeyType \times \valType \rightarrow \ipropType, ~
    \KeyUpdStatusSymb: \SIconnType \times \SIkeyType \times \SIboolType \rightarrow \ipropType. \\
    & \Sep_{\Key\, \in\, \Keys} \mapstoMem{\Key}{\HistEmpty} ~ \land 
    \text{\ruleref{si-init-client-spec} holds } \land
    \text{\ruleref{si-start-spec} holds } \land \\
    & \text{\ruleref{si-write-spec} holds } \land 
    \text{\ruleref{si-commit-spec} holds } \land
    \text{\ruleref{si-read-spec} holds}
  \end{align*}
  \caption{Specification for snapshot isolation.}
  \label{fig:specs_si}
  \vspace{-1mm}
\end{figure}
\paragraph{Parametric resources} 
As mentioned above, all resources 
in the specification are existentially quantified. 
All a user of the specification knows about the resources is that they work with the 
per-operation specifications. Thus, any client program, which is verified using $\Spec_{\mathit{SI}}$,
can work with any database implementation that implements the transactional library 
as specified in $\Spec_{\mathit{SI}}$. When our proofs in later sections of the paper 
modifies the resources of $\Spec_{\mathit{SI}}$, it is not observable to any client because we make 
sure the modified resources can still satisfy $\Spec_{\mathit{SI}}$, and due to the existential quantification, 
the client will never know that the underlying resources have changed. 
Further, we preemptively emphasize that our proofs, showing that a database implementation $\mathit{lib}$
is correct w.r.t. the state-based model, do not assume anything about the
implementation except that it implements specification ($\Spec_{\mathit{SI}}(\mathit{lib})$).

\subsection{Per-operation (Atomic) Specifications} 
\label{sec:3.2}

The per-operation specifications for all the database operations are shown in \Cref{fig:specs_si_operations}.
All operations, except the $\mathsf{init}$ operation, are not specified using ordinary Hoare triples but using so-called \emph{atomic} Hoare triples. 
To explain the difference between the two, we first recall that a Hoare triple $\anhoare{P}{\expr}{\Ret \val. Q}{}{}$ says that if the resources in $P$ are satisfied, 
the expression $\expr$ is safe to execute, and if $\expr$ terminates the resources in $Q$ will hold, with
the return value of $\expr$ bound to $\val$ in $Q$. With a Hoare triple $\anhoare{P}{\expr}{\Ret \val. Q}{}{}$, 
the resources in $P$ are consumed by the duration $\expr$ executes and \emph{only} when $\expr$ terminates, the resources 
in $Q$ are given back to the user of the specification. In highly concurrent systems where resources are shared using invariants, 
this is not strong enough. Thus, we specify operations using atomic Hoare triples $\anhoareatomic{P}{\expr}{\Ret \val. Q}{}{}$ \citep{10.1007/978-3-662-44202-9_9}. 
Intuitively, an atomic Hoare triple asserts that the expression being specified has a linearization point around which 
the resources $P$ are transformed to $Q$. Thus, around atomic Hoare triples, we are allowed to soundly open invariants 
as shown in \ruleref{Inv-atomic}.
\begin{figure}[htbp]
\vspace{-5mm}
  \begin{mathpar}
  \inferH{Inv-atomic}
  {\proves \anhoareatomic{R \ast P}{\expr}{\Ret \val. Q \ast R}{}{}}
  {\knowInv{}{R} \proves \anhoareatomic{P}{\expr}{\Ret \val. Q}{}{}}
  \end{mathpar}
\vspace{-8mm}
\end{figure}
An invariant holding a resource $R$ is written $\knowInv{}{R}$. 
From the exclusive ownership of a resource $R$, we can create an invariant $\knowInv{}{R}$ which does not assert exclusive ownership 
of $R$ any longer. Instead, $\knowInv{}{R}$ is \emph{duplicable} and can be freely shared.
Usually, invariants in Iris can only be opened around expressions that can step to a value in one step according to the operational semantics, 
atomic Hoare triples lets us lift this restriction.
Next, we go into details with each of the operation specifications. 

\paragraph{Init} Starting with the $\mathsf{init}$ operation, used for initializing a client connection, 
it takes no precondition and returns the $\CanStart{\Cst}$ resource. The resource \\ 
$\CanStart{\Cst}$ reflects 
that there is currently no active transaction on the connection $\Cst$, and that a new transaction can be started using the $\mathsf{start}$ operation.
\begin{figure}[h!]
  \vspace{-3mm}
  \centering
  \begin{mathpar}
    \axiomH{si-init-client-spec}
    { 
      \anhoare
        {
          \TRUE
        }
        {
          {\texttt{\scriptsize\linespread{0.4}init}} ~ 
          \srvsa
        } 
        { \begin{array}[t]{@{}l@{}}
          \Ret \Cst. \CanStart{\Cst}
          \end{array}
          }
        {}{}
    } 
    \and
    \axiomH{si-read-spec}
    { 
      \anhoareatomic
        { \begin{array}[t]{@{}l@{}}
          \mapstoCache{\Key}{\Cst}{\ValueOption}
          \end{array}
        }
        { \begin{array}[t]{@{}l@{}}
          {\texttt{\scriptsize\linespread{0.4}read}} ~ 
          \Cst ~ \Key
          \end{array}
        } 
        { \begin{array}[t]{@{}l@{}}
          \Ret \ValueOption. 
          \mapstoCache{\Key}{\Cst}{\ValueOption}
        \end{array}}
        {}{}
    } \\
    \and
    \vspace{3mm}
    \axiomH{si-write-spec}
    { 
      \anhoareatomic
        {
         \begin{array}[t]{@{}l@{}}
          \mapstoCache{\Key}{\Cst}{\ValueOption} \ast
          \KeyUpdStatus{\Cst}{\Key}{\Boolean}
        \end{array}
        }
        {
          {\texttt{\scriptsize\linespread{0.4}write}} ~ 
          \Cst ~ \Key ~ \val
        } 
        { \begin{array}[t]{@{}l@{}}
          \Ret \TT. \mapstoCache{\Key}{\Cst}{\Some \val} \ast 
                                      \KeyUpdStatus{\Cst}{\Key}{\TRUE}
          \end{array}
          }
        {}{}
    } \\
    \and
    \vspace{5mm}
    \axiomH{si-start-spec}
    { 
      \anhoareVatomic
        { \begin{array}[t]{@{}l@{}}
          \CanStart{\Cst} \ast
          \Sep_{(\Key,\, \Hist) \in\, \map} \mapstoMem{\Key}{\Hist}
        \end{array}
        }
        { \begin{array}[t]{@{}l@{}}
          {\texttt{\scriptsize\linespread{0.4}start}} ~ 
          \Cst
        \end{array}
        } 
        { \begin{array}[t]{@{}l@{}}
          \Ret \TT. \Active{\Cst}{\map} \asts \\
          \Sep_{(\Key,\, \Hist) \in\, \map} \Big(\mapstoMem{\Key}{\Hist} \asts
          \mapstoCache{\Key}{\Cst}{\HistVal{\Hist}} \ast 
          \KeyUpdStatus{\Cst}{\Key}{\FALSE} \Big)
        \end{array}}
        {}{}
    }
    \and
    \axiomH{si-commit-spec}
    { 
      \anhoareVatomic
        { \begin{array}[t]{@{}l@{}}
          \Active{\Cst}{\Snapshot} \ast \dom ~ \map = \dom ~ \Snapshot =  \dom ~ \Cache \asts \\[0.4em]
          \Sep_{(\Key,\, \Hist) \in\, \map} \mapstoMem{\Key}{\Hist} ~ \asts
          \Sep_{(\Key,\, (\ValueOption,\, \Boolean)) \in\, \Cache} 
            \big(\mapstoCache{\Key}{\Cst}{\ValueOption} \ast \KeyUpdStatus{\Cst}{\Key}{\Boolean}\big)
          \end{array}
        }
        { \begin{array}[t]{@{}l@{}}
          {\texttt{\scriptsize\linespread{0.4}commit}} ~ 
          \Cst
          \end{array}
        }
        { \begin{array}[t]{@{}l@{}}
          \Ret \val. \CanStart{\Cst} ~ \ast ~ \\[0.4em]
          \Big(\val = \TRUE \ast \CanCommitPredicate{\map}{\Snapshot}{\Cache} \ast 
          \Sep_{\substack{(\Key,\, \Hist) \in\, \map \\ 
          (\Key,\, p) \in\, \Cache}} \mapstoMem{\Key}{\CommitHist{p}{\Hist}}\Big) ~ \lor ~ \\[0.7cm]
          \Big(\val = \FALSE \ast \lnot \CanCommitPredicate{\map}{\Snapshot}{\Cache} ~ \asts
          \Sep_{(\Key,\, \Hist) \in\, \map} \mapstoMem{\Key}{\Hist}\Big)
        \end{array}
        }
        {}{}
    }
    \end{mathpar}
    \caption{Per-operation specification for snapshot isolation.}
    \label{fig:specs_si_operations}
\end{figure}

\paragraph{Start}
The $\mathsf{start}$ operation is specified using an atomic Hoare triple. This is because 
it takes as precondition the global points-to resource. 
As potentially many concurrent transactions will be starting and stopping, 
the database state, in the form of the global points-to resources, will typically reside in an invariant,
shared among all clients of the database.
The global points-to resources are described by a map $m$, 
which is the current snapshot of the database. 
The $\mathsf{start}$ operation takes the resource $\CanStart{\Cst}$ and 
changes its state to active for the current snapshot $m$. 
In the postcondition, global points-to resources are returned 
unchanged and can be put back into a shared invariant.
Further, local points-to resources are created for the client 
to use, based on the snapshot $m$,
together with $\KeyUpdStatusSymb$ resources that reflects no writes have been done yet.

\paragraph{Write and read} The specifications for the $\mathsf{write}$ and $\mathsf{read}$ operations are almost identical to specifications 
for load and store operations in sequential memory; 
changes are immediately propagated and observed in the local points-to resources. 
Apart from updating the local points-to resources, the write specification also marks,
using the $\KeyUpdStatusSymb$ resource, that the current transaction has written to the key $\Key$. 
We use this information in the commit specification to detect write conflicts with concurrent transactions and 
decide whether the current transaction is allowed to commit.

\paragraph{Commit} A client calls the $\mathsf{commit}$ operation with the resource $\Active{\Cst}{\Snapshot}$ 
stating that the current transaction has been reading from the starting snapshot $ms$. 
The precondition also takes the global points-to resources, recorded by the current snapshot of the database $m$, 
and the changes of the client, in the form of local points-to and $\KeyUpdStatusSymb$ resources, described by the map $mc$.
No matter the outcome of the $\mathsf{commit}$ operation, the postcondition includes $\CanStart{\Cst}$
to reflect that the client has ended its transaction, and it can start a new one.
The rest of the postcondition is a disjunction representing the successful and the unsuccessful commit case. 
To capture write conflicts with concurrent transactions, we use the $\CanCommitPredicateSymb$ 
predicate in \eqref{fig:commit_pred}.
\begin{align}
 \CanCommitPredicate{\map}{\Snapshot}{\Cache} \eqdef
      \forall \Key \in \Keys, ~ \MapLookup{\Cache}{\Key} = \Some{(p, \TRUE)}
      \Rightarrow \MapLookup{\map}{\Key} = \MapLookup{\Snapshot}{\Key}. \label{fig:commit_pred}
\end{align}
\noindent The predicate is true if for each key the current transaction has written to (the keys 
where $\MapLookup{\Cache}{\Key} = \Some{(p, \TRUE)}$ holds as $\TRUE$ comes from the $\KeyUpdStatusSymb$ resource 
which is updated in the write specification) the transaction's starting snapshot $ms$ is equal to the current snapshot 
of the database $m$. The database state is tracked with the global points-to resources using histories, 
\ie list of values representing all past writes,
to be able to check for changes between snapshots of the database. 
Using values instead of histories is not sound when comparing different snapshots where the latest value is the same. 
In the successful commit case of the postcondition, 
the global points-to resources are updated with the writes made by the committing transaction
by appending to the current histories. 
The global points-to resources are left unchanged in the unsuccessful commit case. 


%% file: sections/method.tex
\section{Connecting Transactional Models to Semantic Models of Program Execution}
\label{sec:4}
In this section, we explain our method for extracting transactions
from database implementations satisfying the Hoare triple specifications of \citet{SI-placeholder}.
Ultimately, we wish to prove that a database implementation 
for an isolation level $\TLevel$ correctly implements that isolation level as per 
the definition of implementation correctness from \Cref{def:implements}.
As the definition of implementation correctness assumes a set of transactions
as defined in the state-based model (\Cref{def:transaction}),
the transactions we extract must conform to this definition, 
implying that transactions must read their own writes.
The extraction, from code to transactions usable in the state-based model, 
is depicted in \Cref{fig:extraction_method}.
Before going into detail with each level of the extraction in the coming subsections, 
we give a more detailed overview of the extraction than we did in \Cref{sec:overview}.

The extraction works across four different levels, with the first level (level 0) being the executable code and the last level (level 3)
being an extracted set of transactions $\TTransactionSetOf$ for which 
implementation correctness must hold, namely
there must exist an execution at the model level that satisfies a given commit test. 
The core idea here is that we \emph{instrument} the database code to produce a trace of \emph{events} as the database runs. 
The instrumentation happens at level 0.
(We emphasize that this instrumentation bears no effect on the execution of the program; it is \emph{ghost code} \citep{10.1145/360051.360224} which one can erase before compiling and running the code.)
We instrument each operation so that it produces a so-called \emph{pre event} and a \emph{post event}, marking the start and end of the operation --- these are tagged by a globally unique identifier for the operation in question.
The trace of pre and post events constitute level 1; events emitted from any operation 
all go into the same trace.  
Throughout the extraction, for each operation we also produce its linearization point event 
(also tagged by the operation's unique identifier).
Unlike the pre and post events, the linearization point events are not produced by code instrumentation because they are only determined at the logical level, taking into account all other concurrently running operations from other transactions.
Thus, linearization point events are captured as a separate sequence of events which need to be suitably coherent with the trace of pre and post events (this will be made precise in the coming sections).
Intuitively, the order of the linearization points is the order in which operations are observed by the database.
The trace of linearization points makes up level 2 in the extraction.
From the trace of linearization points, we have enough information to extract a set of transactions which 
makes up level 3. A single linearization point corresponds to a single operation, 
and as the trace holds the order in which the points are captured, 
we have enough information to soundly extract transactions. 
Once the set of transactions have been extracted, we have all the information we need 
to reason in the state-based model; implementation correctness \Cref{def:implements}
says that for any set of transactions a database produces, 
there must exist an execution for which the commit test of the isolation level the database 
implements must hold --- our extraction method exactly 
captures the transactions produced by the database.

For the rest of this paper, we assume that we are given a database
implementation (for each isolation level) that satisfies the specifications given
by \citet{SI-placeholder}. Note that we make no further assumptions on the implementation,
nor on how the specifications have been proven;
indeed, we show that \emph{any} implementation that satisfies those
specifications correctly implements the relevant isolation level in the sense of
\Cref{def:implements}. 

Next, we explain each level in detail, and use the \emph{write skew} example from \Cref{fig:SI_skew_example} as our running example. 

\subsection{Executable Code Instrumentation (Levels 0 and 1)}
\label{sec:4.1}
At level 0, we only instrument the client facing interface of the database implementation to emit pre and post events. 
Note that \Cref{fig:extraction_method} only shows the instrumentation for the
$\mathsf{read}$ operation; all the operations of the interface are instrumented similarly
as we display in \Cref{fig:instrumentation}.
\begin{figure}[!h]
  \vspace{-5mm}
  \centering
  \begin{align*}
  \Wrap{{\texttt{\scriptsize\linespread{0.4}start}} ~ \Cst} & \eqdef \SIletin{\tau}{\textlang{\scriptsize\color{ACMDarkBlue}fresh} ~ (\Cst, \mathbf{S})} ~ {\texttt{\scriptsize\linespread{0.4}start}} ~ \Cst; ~ \textlang{\scriptsize\color{ACMDarkBlue}emit} ~ (\tau, \Cst, \mathbf{S}); () \\
  \Wrap{{\texttt{\scriptsize\linespread{0.4}read}} ~ \Cst ~ \Key} & \eqdef \SIletin{\tau}{\textlang{\scriptsize\color{ACMDarkBlue}fresh} ~ (\Cst, \mathbf{R}, k)} ~ \SIletin{ov}{{\texttt{\scriptsize\linespread{0.4}read}} ~ \Cst ~ \Key} ~ \textlang{\scriptsize\color{ACMDarkBlue}emit} ~ (\tau, \Cst, \mathbf{R}, k, ov); ov \\
  \Wrap{{\texttt{\scriptsize\linespread{0.4}write}} ~ \Cst ~ \Key ~ \val} & \eqdef \SIletin{\tau}{\textlang{\scriptsize\color{ACMDarkBlue}fresh} ~ (\Cst, \mathbf{W}, k, \val)} ~ {\texttt{\scriptsize\linespread{0.4}write}} ~ \Cst ~ \Key ~ \val; ~ \textlang{\scriptsize\color{ACMDarkBlue}emit} ~ (\tau, \Cst, \mathbf{W}, k); () \\
  \Wrap{{\texttt{\scriptsize\linespread{0.4}commit}} ~ \Cst} & \eqdef \SIletin{\tau}{\textlang{\scriptsize\color{ACMDarkBlue}fresh} ~ (\Cst, \mathbf{C})} ~ \SIletin{b}{{\texttt{\scriptsize\linespread{0.4}commit}} ~ \Cst} ~ \textlang{\scriptsize\color{ACMDarkBlue}emit} ~ (\tau, \Cst, \mathbf{C}, b); b
  \end{align*}
  \caption{Instrumentation of database operations} 
  \label{fig:instrumentation}
\end{figure}
In \Cref{fig:skew-lvl-1}, we show one possible trace of pre and post emit events that can come from the \emph{write skew} example 
from \Cref{fig:SI_skew_example}. 
In this particular trace (notice there are several possible traces that the \emph{write skew} example could produce), 
we display only the events for the $\mathsf{read}$ and $\mathsf{write}$ operations. 
To explain the structure of the events, let us consider the first event $(\tau_1, c_1, \mathbf{R}, x)$.
It consists of the tag $\tau_1$, the client connection $c_1$, a label $\mathbf{R}$, 
stating that the event is from a $\mathsf{read}$ operation, and the key being read $x$.
Thus, this is the pre event for the $\mathsf{read}$ operation of transaction $\TTransaction_1$ 
that the instrumentation has emitted.
The next event in the trace is the pre event for the $\mathsf{read}$ operation of transaction $\TTransaction_2$.
It is not until the third slot of the trace that the post emit event 
$(\tau_1, c_1, \mathbf{R}, x_0)$ of the $\mathsf{read}$ operation from $\TTransaction_1$ has been emitted.
The post event $(\tau_1, c_1, \mathbf{R}, x_0)$ is similar to the corresponding pre event $(\tau_1, c_1, \mathbf{R}, x)$:
the tag and connection are the same, tying the two events together, and they are marked as coming from a $\mathsf{read}$ operation using $\mathbf{R}$, 
but this time the value being read $x_0$ is returned.
\ie the result of reading the key $x$ is the value $x_0$ (the initial version of $x$). 
The interleaving of events show that $\mathsf{read}$ operations are concurrent. 
The later half of the trace display the pre and post emit events 
of the $\mathsf{write}$ operations from the \emph{write skew} example.
They too are interleaved and follow the instrumentation structure from \Cref{fig:instrumentation}.

\begin{figure}[h]
  \vspace{-2mm}
  \footnotesize
  \centering
  \begin{tikzpicture}[>=latex, scale=0.848, every node/.style={scale=0.848}]
    \matrix[mymats=white,anchor=north, text width=18.5mm, text height=3mm, text badly centered]
    at (0,0) 
    (mat1)
    {
      \langle \tau_1, c_1, \mathbf{R}, x \rangle & \langle \tau_2, c_2, \mathbf{R}, y \rangle & \langle \tau_1, c_1, \mathbf{R}, x_0 \rangle & \langle \tau_2, c_2, \mathbf{R}, y_0 \rangle & 
      \langle \tau_3, c_1, \mathbf{W}, x, x_1 \rangle & \langle \tau_4, c_2, \mathbf{W}, y, y_1 \rangle & \langle \tau_3, c_1, \mathbf{W} \rangle & \langle \tau_4, c_2, \mathbf{W} \rangle \\
    };
    \draw[black!70] (-7.1,-1) node[anchor = north] {\begin{minipage}{7em}\centering "Pre event of $\TTransaction_1$ \\ ~~ $\mathsf{read}$ operation" \end{minipage}};
    \draw[black!70] (-5,-1) node[anchor = north] {\begin{minipage}{7em}\centering "Pre event of $\TTransaction_2$ \\ ~~ $\mathsf{read}$ operation" \end{minipage}};
    \draw[black!70] (-3,-1) node[anchor = north] {\begin{minipage}{7em}\centering "Post event of $\TTransaction_1$ \\ ~~ $\mathsf{read}$ operation" \end{minipage}};
    \draw[black!70] (-0.9,-1) node[anchor = north] {\begin{minipage}{7em}\centering "Post event of $\TTransaction_2$ \\ ~~ $\mathsf{read}$ operation" \end{minipage}};
    \draw[black!70] (1.1,-1) node[anchor = north] {\begin{minipage}{7em}\centering "Pre event of $\TTransaction_1$ \\ ~~ $\mathsf{write}$ operation" \end{minipage}};
    \draw[black!70] (3.1,-1) node[anchor = north] {\begin{minipage}{7em}\centering "Pre event of $\TTransaction_2$ \\ ~~ $\mathsf{write}$ operation" \end{minipage}};
    \draw[black!70] (5.2,-1) node[anchor = north] {\begin{minipage}{7em}\centering "Post event of $\TTransaction_1$ \\ ~~ $\mathsf{write}$ operation" \end{minipage}};
    \draw[black!70] (7.2,-1) node[anchor = north] {\begin{minipage}{7em}\centering "Post event of $\TTransaction_2$ \\ ~~ $\mathsf{write}$ operation" \end{minipage}};
  \end{tikzpicture}
  \caption{Extraction of the \emph{write skew} example at level 1 (events from the $\mathsf{start}$ and $\mathsf{commit}$ operations are omitted).} 
  \label{fig:skew-lvl-1}
\end{figure}

In terms of the operational semantics, we have integrated the code instrumentation primitives $\Tfresh$ and $\Temit$ of \citet{free-theorems} into the operational semantics of AnerisLang, 
the programming language of the Aneris program logic (an instantiation of Iris) \citep{DBLP:conf/esop/Krogh-Jespersen20} used by \citet{SI-placeholder} in their formalization.
An excerpt of the operational semantics of AnerisLang (with the added code instrumentation primitives) is given in \Cref{fig:ghost_code_op}.
See \citet{free-theorems} and the Aneris documentation \citep{aneris-documentation} for details.%
\footnote{Our presentation in \Cref{fig:ghost_code_op} is slightly simplified compared to the Aneris documentation \citep{aneris-documentation} where the step relation $\nhstep$ is split into several layers and only the network layer is explicitly marked with the node name.}
\begin{figure}[h]
  \vspace{-2mm}
  \raggedright
  \fbox{\tiny$(e, \sigma, t) \nhstep_{(n)} (e', \sigma', t')$}
  \begin{mathpar}
    \axiomH{Pair-Proj-Step1}
    {(\textlang{\scriptsize\color{ACMDarkBlue}fst} ~(\val_1, \val_2), \sigma, t) \nhstep_{(n)} (\val_1, \sigma, t)}
    \and
    \axiomH{Function-Call-Step}
    {((\RecV f (\var)= \expr) ~ \val, \sigma, t) \nhstep_{(n)} (e[(\RecV f (\var)= \expr)/f][\val/x], \sigma, t)}
    \and
    \inferH{Load-From-Heap-Step}
    {\sigma.\mathit{heap}_n(\ell) = v}
    {(!\ell, \sigma, t) \nhstep_{(n)} (v, \sigma, t)}
    \and
    \axiomH{Emit-Step}
    {(\textlang{\scriptsize\color{ACMDarkBlue}emit} ~ \val, \sigma, t) \nhstep_{(n)} ((), \sigma, t \cdot \val)}
    \and
    \inferH{Fresh-Step}
    {\tau ~\text{ is fresh in trace }~t}
    {(\textlang{\scriptsize\color{ACMDarkBlue}fresh} ~ \val, \sigma, t) \nhstep_{(n)} (\tau, \sigma, t \cdot \langle\tau, \val\rangle)}
  \end{mathpar}
  \caption{An excerpt of Aneris's operational semantics; $\sigma$ tracks the state of heaps of all nodes, and the global state of the network; $t$ is the global trace of all events; $n$ is the name of the node in the network (ip address).} 
  \label{fig:ghost_code_op}
  \vspace{1mm}
\end{figure}
Note that in the rule \ruleref{Emit-Step}, the value $\val$ is added to the global trace $t$. Likewise, 
the fresh expression also appends its argument value $\val$ to the global trace, but it does so together with a freshly generated tag $\tau$ --- see \Cref{fig:extraction_method} for an example.

The core idea of the methodology of \citet{free-theorems} is to 
enforce a so-called trace invariant, written $\TtraceInv{\mathcal{L}}$, which
guarantees that the global trace of events must belong to the language
$\mathcal{L}$ (a set of traces). In the terminology of the Iris framework,
$\TtraceInv{\mathcal{L}}$ is an example of an \emph{invariant}, a property that all verified
programs, regardless of their pre- and postconditions must uphold at all times.
This is reflected in the proof rules \ruleref{Ht-emit} and
\ruleref{Ht-fresh} for the ghost operations $\SIemC{}$ and $\SIfrC{}$ in \Cref{fig:ghost_code_ht}. Here, $\Ttrace{t}$ is a
predicate stating that the global trace is currently $t$; taken together with
$\TtraceInv{\mathcal{L}}$, one concludes that $t \in \mathcal{L}$. Note how
these rules ensure that the trace invariant is preserved by the execution steps of
the code instrumentation. 
\begin{figure}[h]
  \vspace{-2mm}
  \centering
  \begin{mathpar}
    \inferH{Ht-emit}
    {t \cdot v \in \mathcal{L}}
    {\TtraceInv{\mathcal{L}}
     \vdash
     \anhoare
      {
        \begin{array}[t]{@{}l@{}}
          \Ttrace{t}
        \end{array}
      }
      {\textlang{\scriptsize\color{ACMDarkBlue}emit} ~ \val}
      {
        \begin{array}[t]{@{}l@{}}
          \Ttrace{t \cdot v}
        \end{array}
      }
      {}{}
    } 
    \and
    \inferH{Ht-fresh}
    {\forall \tau, ~ \tau ~\text{ is fresh in }~ t \Rightarrow t \cdot \langle\tau, \val\rangle \in \mathcal{L}}
    {\TtraceInv{\mathcal{L}}
     \vdash
     \anhoare
      {
        \begin{array}[t]{@{}l@{}}
          \Ttrace{t}
        \end{array}
      }
      {\textlang{\scriptsize\color{ACMDarkBlue}fresh} ~ \val}
      {
        \Ret \tau.
        \begin{array}[t]{@{}l@{}}
          \tau \not\in \TTags(t) \land \Ttrace{t \cdot \langle\tau, \val\rangle}
        \end{array}
      }
      {}{}
    }  
  \end{mathpar}
  \caption{Hoare-triple specifications for the $\Tfresh$ and $\Temit$ code instrumentation operations.} 
  \label{fig:ghost_code_ht}
\end{figure}
It is essentially our definition of $\mathcal{L}$ that makes up the extraction of \Cref{fig:extraction_method}.
We choose $\mathcal{L} \eqdef \{\Trace ~ | ~ \ValTrace{\TLevel}{\Trace} \}$ where the predicate $\ValTrace{\TLevel}{\Trace}$, 
which we refer to as the global proof invariant, is defined in \Cref{fig:global_invariant}.
The argument $\TLevel$ is the isolation level that the database is implementing, and 
$\Trace$ is the trace of pre and post events created by the instrumentation.
The global proof invariant asserts the existence of a sequence of linearization points $\LTrace$ (level 2), 
a set of transactions $\TTransactionSetOf$ (level 3) and an execution $\TExecution$. 
On the first line it states validity of $\LTrace$ and $\TTransactionSetOf$ 
(we will explain these predicates below),
before asserting implementation correctness, as in \Cref{def:implements}, in the first half of the second line.
The last three predicates makes sure there is coherence between the different 
levels (we will explain these predicates below), \ie that a given level is consistent with its previous level.
{
\begin{align}
  \ValTrace{\TLevel}{\Trace} \eqdef{} & \exists ~ \LTrace ~ \TTransactionSetOf ~ \TExecution, ~ 
  \ValSeq{\LTrace} \land \ValTrans{\TTransactionSetOf} ~ \land \tag{The Global Proof Invariant} \label{fig:global_invariant} \\
  & (\forall \TTransaction \in \ComTrans{\TTransactionSetOf}. ~ \TComTest{\TLevel}{\TTransaction}{\TExecution}) \land 
  \CohTrace{\Trace}{\LTrace} ~ \land \notag \\ 
  & \CohTrans{\LTrace}{\TTransactionSetOf} \land \ExecOf{\ComTrans{\TTransactionSetOf}}{\TExecution} \notag
\end{align}
}
We remark that because our extraction method is expressed as an invariant,
we capture \emph{all} the possible transactions a database implementation produces
under \emph{any} client interaction, and show that these are sound transactions. 

\subsection{Linearization Points (Level 2)}

At level 2, we have a trace of linearization points. A single linearization
point must correspond to a pair of pre and post emit events from the global
trace at level 1, as captured by the predicate $\CohTrace{\Trace}{\LTrace}$, which
asserts that:
\begin{enumerate}
  \item The order of $\Trace$ is respected in $\LTrace$: for any $\mathit{lp}_1$ that comes before a $\mathit{lp}_2$ in $\LTrace$, the post event of $\mathit{lp}_2$ does not come before the pre event of $\mathit{lp}_1$ in $\Trace$.
  \item Every pair of pre and post events in $\Trace$ has a corresponding
        linearization point (agreeing on the tag, arguments, and return value of the operation).
        \label{item:pre-pos-must-have-lin}
  \item Every linearization point in $\LTrace$ must have a pre event in $\Trace$.
  \item Linearization points in $\LTrace$ are unique.
\end{enumerate}
Note that the predicate $\CohTrace{\Trace}{\LTrace}$ essentially captures the standard
linearizability correctness criteria for concurrent objects
\citep{linearizability+strict-serializability};
there is a single point, between the invocation and end of an operation, 
where the effect of the operation appears instantaneously. 
On the logical level, linearization points in separation logic are exposed through 
atomic Hoare triples. Recall the per-operation specifications from \Cref{sec:3.2}. 
They were specified using atomic Hoare triples (such that clients can use
invariants to reason about concurrent threads, and invariants can be
opened around the database operations).
As the extraction we are defining in this section is a trace invariant,
\ie an invariant in Iris, we can use the fact that the operations are 
specified with atomic Hoare triples to insert 
linearization points in the sequence at level 2, 
in between the emit of the pre event and the post event in the instrumented code.
The insertion happens as part of the proof that from verified database implementations we extract valid transactions  
w.r.t. the state-based model, we will see this in detail in \Cref{sec:5}. Because linearization points are
unique, and we define them to hold all information about the operations they correspond to, 
we can essentially treat them as operations. 
The predicate $\ValSeq{\LTrace}$ asserts that operations (their linearization
points) must appear in the expected order, \ie $\mathsf{start}$ must be called before
read and writes can be performed, $\mathsf{commit}$ can only appear after $\mathsf{start}$ (and
possibly some reads and/or  writes after $\mathsf{start}$), etc.

In \Cref{fig:skew-lvl-2}, we show one of the possible extractions of linearization points
for the trace of pre and post events from the \emph{write skew} example in \Cref{fig:skew-lvl-1}.
Notice they way we have defined linearization points: they correspond exactly 
to the definition of transaction operations we defined in the state-based model \Cref{sec:2}.
This sequence of linearization points is a sound extraction, as it satisfies the conditions of the $\ValSeqSym$ predicate. 
Noticeable, each pair of pre and post events have one linearization point, 
and the linearization points respect the order imposed by the trace pre and post events on the previous level;
as the $\mathsf{read}$ operation from the two transactions of the example are concurrent (their pre and post events overlap), 
their linearization points can be ordered either way, similarly for the $\mathsf{write}$ operations. 
But, it would be an unsound order of linearization points if one of the two $\mathsf{read}$ operations were after 
a linearization point from one of the two $\mathsf{write}$ operations, because 
the post event of any $\mathsf{read}$ operation comes before the pre event of any $\mathsf{write}$ operation. 
Thus, we know for sure that the $\mathsf{read}$ operations ended before the $\mathsf{write}$ operations started. 
\begin{figure}[h]
  \vspace{-2mm}
  \footnotesize
  \centering
  \begin{tikzpicture}[>=latex, scale=0.8, every node/.style={scale=0.79}]
    \matrix[mymats=white,anchor=north, text width=20.0mm, text height=3mm, text badly centered]
     at (0,1.5) 
    (mat2)
    {
      \TR\big((\tau_1, c_1), x, x_0\big) & \TR\big((\tau_2, c_2), y, y_0\big) & \TW\big((\tau_3, c_1), y, y_1\big) & \TW\big((\tau_4, c_2), x, x_1\big) \\
    };
    \matrix[mymats=white,anchor=north, text width=18.5mm, text height=3mm, text badly centered]
    at (0,0) 
    (mat1)
    {
      \langle \tau_1, c_1, \mathbf{R}, x \rangle & \langle \tau_2, c_2, \mathbf{R}, y \rangle & \langle \tau_1, c_1, \mathbf{R}, x_0 \rangle & \langle \tau_2, c_2, \mathbf{R}, y_0 \rangle & 
      \langle \tau_3, c_1, \mathbf{W}, x, x_1 \rangle & \langle \tau_4, c_2, \mathbf{W}, y, y_1 \rangle & \langle \tau_3, c_1, \mathbf{W} \rangle & \langle \tau_4, c_2, \mathbf{W} \rangle \\
    };
    \draw[black!70] (-7.1,-1) node[anchor = north] {\begin{minipage}{7em}\centering "Pre event of $\TTransaction_1$ \\ ~~ $\mathsf{read}$ operation" \end{minipage}};
    \draw[black!70] (-5,-1) node[anchor = north] {\begin{minipage}{7em}\centering "Pre event of $\TTransaction_2$ \\ ~~ $\mathsf{read}$ operation" \end{minipage}};
    \draw[black!70] (-3,-1) node[anchor = north] {\begin{minipage}{7em}\centering "Post event of $\TTransaction_1$ \\ ~~ $\mathsf{read}$ operation" \end{minipage}};
    \draw[black!70] (-0.9,-1) node[anchor = north] {\begin{minipage}{7em}\centering "Post event of $\TTransaction_2$ \\ ~~ $\mathsf{read}$ operation" \end{minipage}};
    \draw[black!70] (1.1,-1) node[anchor = north] {\begin{minipage}{7em}\centering "Pre event of $\TTransaction_1$ \\ ~~ $\mathsf{write}$ operation" \end{minipage}};
    \draw[black!70] (3.1,-1) node[anchor = north] {\begin{minipage}{7em}\centering "Pre event of $\TTransaction_2$ \\ ~~ $\mathsf{write}$ operation" \end{minipage}};
    \draw[black!70] (5.2,-1) node[anchor = north] {\begin{minipage}{7em}\centering "Post event of $\TTransaction_1$ \\ ~~ $\mathsf{write}$ operation" \end{minipage}};
    \draw[black!70] (7.2,-1) node[anchor = north] {\begin{minipage}{7em}\centering "Post event of $\TTransaction_2$ \\ ~~ $\mathsf{write}$ operation" \end{minipage}};
    \draw[*->] (mat1-1-1.north) -- (mat2-1-1.south);
    \draw[*->] (mat1-1-3.north) -- (mat2-1-1.south);
    \draw[*->] (mat1-1-2.north) -- (mat2-1-2.south);
    \draw[*->] (mat1-1-4.north) -- (mat2-1-2.south);
    \draw[*->] (mat1-1-5.north) -- (mat2-1-3.south);
    \draw[*->] (mat1-1-7.north) -- (mat2-1-3.south);
    \draw[*->] (mat1-1-6.north) -- (mat2-1-4.south);
    \draw[*->] (mat1-1-8.north) -- (mat2-1-4.south);
  \end{tikzpicture}
  \caption{Extraction of the \emph{write skew} example from level 1 to level 2 ($\mathsf{start}$ and $\mathsf{commit}$ operations are omitted).}
  \label{fig:skew-lvl-2}
\end{figure}
\vspace{-3mm}
\subsection{Set of Transactions Adhering to the State-based Model (Level 3)}
At level 3, we "chop" the sequence of operations (their linearization points)
into sets of transactions. The predicate
$\CohTrans{\LTrace}{\TTransactionSetOf}$ in \eqref{fig:global_invariant} asserts
coherence by stating that transactions in $\TTransactionSetOf$ have the same
operations as the trace $\LTrace$, any non $\mathsf{commit}$ operation must be in the same
transaction as the $\mathsf{commit}$ operation that follows in $\LTrace$ from the same
connection, and that for any two operations from a transaction $\TTransaction$
in $\TTransactionSetOf$ the program order of $\TTransaction$ must agree with the
ordering of the two operations in $\LTrace$. The predicate
$\ValTrans{\TTransactionSetOf}$ ensures that the set of transactions
$\TTransactionSetOf$ we extract meets the conditions expected by the state-based
model:

\begin{enumerate}
  \item There is at most one open transaction per connection. 
        An open transaction is a transaction which is started but not yet committed.
  \item No two transactions contain the same operation. (This includes the operation identifier/tag.)
  \item For each transaction we have: 
        \begin{itemize}
          \item All operations come from the same connection.
          \item A read of key $k$ reads the most recent write to $k$ (if any) in the same transaction. 
          \item There is at most one $\mathsf{commit}$ and it is the last operation of the transaction.
        \end{itemize}
\end{enumerate}

In \Cref{fig:skew-lvl-3}, we show the extraction of transactions from the sequence of linearization points 
in the \emph{write skew} example. In contrast to the previous figure, we have included $\mathsf{start}$ and $\mathsf{commit}$ linearization points.
Intuitively, for each connection, we create a transaction by 
going through the sequence of linearization points and add operations of the connection 
as they appear in the sequence. Notice how $\mathsf{start}$ operations are in the sequence of linearization points, 
but we do not use them when creating transactions. This reflects that $\mathsf{start}$ operations are used in implementations, 
but they are not included in the state-based model.

Because the extraction is an invariant, we sometimes have partially completed transactions in 
the set of transactions. For partially completed transactions, we assert they read their most recent 
write, and are otherwise well-formed as defined above, but we only impose that they should be included
in the reasoning at the state-based model level, to show implementation correctness, when they are complete.
\begin{figure}[h]
  \vspace{-2mm}
  \small
  \centering
  \begin{tikzpicture}[>=latex, scale=0.8, every node/.style={scale=0.8}]
    \matrix[mymats=white,anchor=north, text width=21.0mm, text height=3mm,]
    at (0,0) 
    (mat1)
    {
      \TR\big((\tau_1, c_1), x, x_0\big) & \TR\big((\tau_2, c_2), y, y_0\big) & \TW\big((\tau_3, c_1), y, y_1\big) & \TW\big((\tau_4, c_2), x, x_1\big)
      & \TC\big((\tau_5, c_1), \TRUE\big) & \TC\big((\tau_6, c_2), \TRUE\big) \\
    };
    \node[above=25pt of mat1]
    (mat3) {$\TTransactionSetOf = \{[\TR\big((\tau_1, c_1), x, x_0\big);\TW\big((\tau_3, c_1), y, y_1\big);\TC\big((\tau_5, c_1), \TRUE\big)],
            [\TR\big((\tau_2, c_2), y, x_0\big);\TW\big((\tau_4, c_2), x, x_1\big);\TC\big((\tau_6, c_2), \TRUE\big)] \}$};
    \draw[-{Implies},double, line width=0.25mm] (0,0.2) -- (0,0.9);
  \end{tikzpicture}
  \caption{Extraction of the \emph{write skew} example from level 2 to level 3.} 
  \label{fig:skew-lvl-3}
  \vspace{-2mm}
\end{figure}
\subsection{Implementation Correctness in the State-based Model}
The rest of \eqref{fig:global_invariant} asserts that there exists an execution $\TExecution$ that is coherent with the set of committed transactions
($\ComTrans{\TTransactionSetOf}$ is the subset of committed transactions in $\TTransactionSetOf$), 
\ie $\ExecOf{\ComTrans{\TTransactionSetOf}}{\TExecution}$ as defined in \Cref{sec:2}, 
and for which implementation correctness holds, \ie
$(\forall \TTransaction \in \ComTrans{\TTransactionSetOf}. ~ \TComTest{\TLevel}{\TTransaction}{\TExecution})$.

For the transactions extracted in the \emph{write skew} example \Cref{fig:skew-lvl-3}, 
we create a valid execution in \Cref{fig:skew-lvl-model}; 
the execution is created by applying transaction $\TTransaction_1$ followed by $\TTransaction_2$, 
thus $\ExecOf{\ComTrans{\TTransactionSetOf}}{\TExecution}$ holds. 
The execution is also valid with regard to the commit test for snapshot isolation, 
as this is one of the two executions for the \emph{write skew} example we presented in \Cref{fig:SI_model_example}.
Creating the other execution, by applying $\TTransaction_2$ followed by $\TTransaction_1$ instead, 
is also a valid choice, as implementation correctness merely asserts the existence of some valid execution.

\begin{figure}[h]
  \vspace{-2mm}
  \small
  \centering
  \begin{tikzpicture}[>=latex, scale=0.8, every node/.style={scale=0.8}, node/.style={circle, draw=black, very thick, minimum size=11mm, align=left}]
    \node[node, label={$s_0$}] at (-4.0,0) (zero) {x: $x_0$ \\ y: $y_0$};
    \node[node, label={$s_1$}] at (0,0) (fst) {x: $x_1$ \\ y: $y_0$};
    \node[node, label={$s_1$}] at (4.0,0) (snd) {x: $x_1$ \\ y: $y_1$};
    \draw[->] (zero) -- (fst) node [above, pos=0.5] {$\TTransaction_1$};
    \draw[->] (fst) -- (snd) node [above, pos=0.5] {$\TTransaction_2$};
    \node[below=33pt of fst]
    (mat3) {$\TTransactionSetOf = \{[\TR\big((\tau_1, c_1), x, x_0\big);\TW\big((\tau_3, c_1), y, y_1\big);\TC\big((\tau_5, c_1), \TRUE\big)], 
      [\TR\big((\tau_2, c_2), y, x_0\big);\TW\big((\tau_4, c_2), x, x_1\big);\TC\big((\tau_6, c_2), \TRUE\big)]\}$};
    \draw[-{Implies},double, line width=0.25mm] (0,-1.7) -- (0,-0.95);
  \end{tikzpicture}
  \caption{Extraction of the \emph{write skew} example from level 3 to the model level.}
  \label{fig:skew-lvl-model}
  \vspace{1mm}
\end{figure}

Now that we have defined the extraction method, we can formally state our main theorem
that gives us implementation correctness for a database implementation verified with the specifications of \cite{SI-placeholder}
without making any additional assumption.
As our method for extracting transactions from code is stated as an invariant on traces, we get that any client interaction with a database 
verified using our method, will at any point in its execution only have produced valid transactions according to the state-based model. 
This is exactly what our main theorem, \Cref{def:adequacy} states.

\begin{theorem}[Adequacy]
  Let $\TLevel$ be an isolation level and let $\Spec_{\TLevel}$ be the specifications ascribed by \citet{SI-placeholder} for isolation level $\TLevel$.
  Let $e_1 \dots e_n$ be $n$ programs, \ie the main thread of $n$ nodes running clients of the database, such that for each $e_i$ we have that the quantified Hoare triple $\forall \mathit{lib},\; \anhoare{\Spec_{\TLevel}(\mathit{lib})}{e_i \oplus \mathit{lib}}{\TRUE}{}{}$ holds where $\oplus$ links the transactional library with its client.
  Finally, let $\mathit{lib}_{\TLevel}$ be some database implementation for which $\Spec_{\TLevel}(\mathit{lib}_{\TLevel})$ holds.
  Then, for any given initial state (of heaps and global network) $\sigma_0$ and any point in program execution we reach, \ie for any list of expressions $e'_{1,1}$ \dots{} $e'_{1,m_1}, e'_{2,1}$ \dots{} $e'_{2,m_2}$ \dots{} $e'_{n,1}$ \dots{} $e'_{n,m_n}$ (one for each thread in each node), any state $\sigma$, and any trace $\Trace$ such that
  \begin{center}
    \small
    $([e_1 \oplus \Wrap{\mathit{lib}_{\TLevel}}] \dots [e_n \oplus \Wrap{\mathit{lib}_{\TLevel}}], \sigma_0, \varepsilon) \longrightarrow^* ([e'_{1,1} \dots e'_{1,m_1}], [e'_{2,1} \dots e'_{2,m_2}]\dots[e'_{n,1} \dots e'_{n,m_n}], \sigma, \Trace)$    
  \end{center}
  we have $\ValTrace{\TLevel}{\Trace}$ holds, and furthermore, each $e'_{i,j}$ is not a crashed (runtime error) state.
  \label{def:adequacy}
\end{theorem}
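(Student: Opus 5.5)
The proof reduces the adequacy statement to a single separation logic obligation and discharges it with the adequacy theorem of Aneris extended with the trace primitives of \citet{free-theorems}. That theorem says: if each $e_i \oplus \Wrap{\mathit{lib}_{\TLevel}}$ is verified under $\TtraceInv{\mathcal{L}}$ starting from $\Ttrace{\varepsilon}$, then every reachable configuration is not stuck and its trace lies in $\mathcal{L}$. We instantiate it with $\mathcal{L} \eqdef \{\Trace ~|~ \ValTrace{\TLevel}{\Trace}\}$. Safety of each $e'_{i,j}$ and $\ValTrace{\TLevel}{\Trace}$ then both follow directly. The base case $\ValTrace{\TLevel}{\varepsilon}$ is witnessed by the empty linearization sequence, the empty transaction set and the empty execution.

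What remains is to prove $\Spec_{\TLevel}(\Wrap{\mathit{lib}_{\TLevel}})$ inside Iris, from $\Spec_{\TLevel}(\mathit{lib}_{\TLevel})$ and $\TtraceInv{\mathcal{L}}$. We then instantiate the client's quantified triple with $\Wrap{\mathit{lib}_{\TLevel}}$; this is where parametricity in the resources is essential.

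We define new resources for the wrapped library by pairing each original resource with ghost state. This ghost state records the current level-2 sequence $\LTrace$, the transaction set $\TTransactionSetOf$ and the execution $\TExecution$, plus per-connection fragments giving the open transaction. We also allocate an Iris invariant. It ties $\Ttrace{\Trace}$ to authoritative copies of $\LTrace, \TTransactionSetOf, \TExecution$, and it holds the global points-to resources $\mapstoMem{\Key}{\Hist}$ together with the coherence facts. In particular, the histories in $\map$ must agree with the last state of $\TExecution$, and every history value must be a write of a committed transaction in $\TExecution$.

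Each wrapped operation is proved in three steps: open the invariant around $\Tfresh$ (rule \ruleref{Ht-fresh}); call the original atomic triple; open the invariant around $\Temit$ (rule \ruleref{Ht-emit}). At the linearization point of the original atomic triple we open the invariant by \ruleref{Inv-atomic} and append the operation's linearization point to $\LTrace$. This keeps $\CohTrace{\Trace}{\LTrace}$ true, because the point lies between the pre and post events. At the same moment we update $\TTransactionSetOf$: the operation is added to the connection's open transaction. For a commit we additionally extend $\TExecution$ with the new state when the commit succeeds.

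The hardest step will be re-establishing the commit test at commit of snapshot isolation. At start we record, in ghost state, the index $i$ of the current last state of $\TExecution$ as the snapshot. The start specification gives local points-to resources equal to $\HistVal{\Hist}$ for that state. Reads without a prior write return the snapshot value, which yields $\mathsf{Complete}$. Reads after a write return the local write, which is the read-your-writes clause of $\ValTrans{\TTransactionSetOf}$.

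At a successful commit, $\CanCommitPredicate{\map}{\Snapshot}{\Cache}$ says that every written key has an unchanged history since the snapshot. Because histories only grow by appending, any transaction committed after index $i$ that wrote key $k$ would have extended $k$'s history; this gives $\mathsf{NoConf}$. The subtle point is that a transaction can rewrite the same value. This is exactly why histories, not values, must be linked to the execution, so the invariant must state that each execution step appends precisely the written keys. For read committed the same scheme works, with $\mathsf{Preread}$ established from the value observed at the read's linearization point being present in some earlier committed state. Read uncommitted is trivial.

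Existing commit tests stay valid when $\TExecution$ is extended, since all predicates refer only to earlier indices. Aborted commits simply close the transaction without changing $\TExecution$.
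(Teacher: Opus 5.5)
Your proposal is correct and follows the paper's route: instantiate the client triples with $\Wrap{\mathit{lib}_{\TLevel}}$ (the paper's \Cref{def:lemma-triple}), apply the trace-invariant adequacy of \citet{free-theorems} with $\mathcal{L} = \{\Trace \mid \ValTrace{\TLevel}{\Trace}\}$, and discharge the remaining precondition via $\Spec_{\TLevel}(\mathit{lib}_{\TLevel}) \wand \Ttrace{\varepsilon} \wand \TtraceInv{\mathcal{L}} \wand \Spec_{\TLevel}(\Wrap{\mathit{lib}_{\TLevel}})$ (\Cref{def:lemma-wrap}). Your sketch of that last step is a faithful elaboration of what the paper packages as $\mathsf{Reflect}_{\TLevel}$, the $\PerLin$/$\PerPost$ tokens, and $\mathsf{LatestWrite}$: an internal invariant tying $\Ttrace{\Trace}$ to ghost copies of $\LTrace$, $\TTransactionSetOf$ and $\TExecution$, wrapped resources that pair the originals with ghost state, linearization points inserted at the inner atomic triples' linearization points, and histories tied to the execution to obtain $\mathsf{NoConf}$.
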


The main theorem is a general result for any weak isolation level
$\TLevel$ with an implementation $\mathit{lib}_{\TLevel}$ which satisfies the
specifications $\Spec_{\TLevel}(\mathit{lib}_{\TLevel})$ ascribed by
\citet{SI-placeholder} that we presented for snapshot isolation in \Cref{sec:3}. 
The theorem states that any verified client of the database
running on $n$ separate nodes (each starting with only a single thread $e_i$)
when run for arbitrary number of steps ($\longrightarrow$ is the relation
$\nhstep_{(\cdot)}$ lifted to the level of the entire network) reaches a
configuration in which none of the threads is in a crash state (guaranteeing
no runtime error), and furthermore, the produced trace $t$ is a valid trace,
$\ValTrace{\TLevel}{\Trace}$, as per \eqref{fig:global_invariant}.
The latter ensures
implementation correctness as in \Cref{def:implements}. 
The crucial assumption in this theorem 
is that clients are verified against an arbitrary implementation that
satisfies the expected specifications. This generality allows us to show that a client
also can be linked with the wrapped library ($\Wrap{\mathit{lib}_{\TLevel}}$), 
which includes the code instrumentation of \Cref{fig:instrumentation}
to produce pre and post events, as long as the wrapped library 
also satisfies the expected specification.
Recall that the code instrumentation bears no effect
on programs' execution and can be erased before compilation. 
The crux of the proof
of \Cref{def:adequacy}, which we will discuss in \Cref{sec:5}, is to show two things,
one that
$\forall \mathit{lib}, \anhoare{\Spec_{\TLevel}(\mathit{lib})}{e_i \oplus \mathit{lib}}{\TRUE}{}{}$
implies
$\forall \mathit{lib}, \anhoare{\Spec_{\TLevel}(\Wrap{\mathit{lib}})}{e_i \oplus \Wrap{\mathit{lib}}}{\TRUE}{}{}$, and that $\Spec_{\TLevel}(\mathit{lib})$ implies $\Spec_{\TLevel}(\Wrap{\mathit{lib}})$, for any database implementation $\mathit{lib}$.
While the former is crucial for the proof of \Cref{def:adequacy}, it is trivial to prove.
However, the latter is the main challenge and forms the bulk of the proof of \Cref{def:adequacy}.
In particular, proving this requires maintaining \eqref{fig:global_invariant}, $\ValTrace{\TLevel}{\Trace}$ throughout the proof.
Intuitively, $\Spec_{\TLevel}(\mathit{lib})$ implies $\Spec_{\TLevel}(\Wrap{\mathit{lib}})$ because the specification $\Spec_{\TLevel}(\Wrap{\mathit{lib}})$ changes the  
internal definition of the resources used in $\Spec_{\TLevel}$, without making any external modifications, 
\ie ``the client interface of the specification'' is preserved. 
This can be done as all the resources in the per-operation specifications of 
$\Spec_{\TLevel}$ are existentially quantified as we emphasized in \Cref{sec:3.2}; 
all a client knows about the resources is that 
they work with the specifications. 
The wrapping of a specification thus means creating a new instantiation of the existentially quantified resources
used in the specifications, and showing that all the Hoare triples still hold with the new resources.
Having a proof of $\anhoare{{\Spec_{\TLevel}}(\Wrap{\mathit{lib}})}{e_i \oplus \Wrap{\mathit{lib}}}{\TRUE}{}{}$
gives us that all the clients can safely execute with code instrumentation 
while the invariant $\ValTrace{\TLevel}{\Trace}$ is upheld 
--- otherwise a client would get stuck according to the operational semantic and safety would be violated.
As $\ValTrace{\TLevel}{\Trace}$ encapsulates our extraction method presented throughout this section, 
implementation correctness follows and gives us the result of the main theorem.
The construction of the new resources used in $\Spec_{\TLevel}(\Wrap{\mathit{lib}})$ is made
as a part of the proof of \Cref{def:adequacy}; we show the proof structure 
of \Cref{def:adequacy} in \Cref{sec:5} and a proof sketch with more details for snapshot isolation in \Cref{sec:6}.


%% file: sections/structure.tex
\section{Proof Structure: Transactional Library Implements Transactional Model}
\label{sec:5}

In this section, we discuss the structure of the proof of \Cref{def:adequacy}. 
In \Cref{sec:4}, when we presented the extraction method leading to \Cref{def:adequacy} we used a particular execution of the \emph{write skew} example to explain how the extraction works.
However, to prove \Cref{def:adequacy} we must consider all executions of all possible clients.
This can formally be seen in our \Cref{def:lemma-wrap} which states that if $\Spec_{\TLevel}(\mathit{lib})$ holds then so does $\Spec_{\TLevel}(\Wrap{\mathit{lib}})$.
This intuitively captures that all executions of clients of $\Wrap{\mathit{lib}}$ satisfy \eqref{fig:global_invariant} as long as those clients are verified against $\Spec_{\TLevel}(\mathit{lib})$.

As we explained in \Cref{sec:4}, the crux of the proof of \Cref{def:adequacy} consists of the following \Cref{def:lemma-triple} and \Cref{def:lemma-wrap}:

\begin{lemma}[]
  \label[lemma]{def:lemma-triple}
  Let $\TLevel$ be an isolation level, and let $e_1 \dots e_n$ be n programs, then we have the following:
  \begin{center}
    $(\forall \mathit{lib},\; \anhoare{\Spec_{\TLevel}(\mathit{lib})}{e_i \oplus \mathit{lib}}{\TRUE}{}{}) \rightarrow
    \forall \mathit{lib},\;\anhoare{\Spec_{\TLevel}(\Wrap{\mathit{lib}})}{e_i \oplus \Wrap{\mathit{lib}}}{\TRUE}{}{}$.
  \end{center}
\end{lemma}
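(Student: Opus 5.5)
The plan is to instantiate the universally quantified hypothesis at the wrapped library. The hypothesis gives a Hoare triple for every $\mathit{lib}$, so specializing it with $\mathit{lib} := \Wrap{\mathit{lib}}$ immediately yields $\anhoare{\Spec_{\TLevel}(\Wrap{\mathit{lib}})}{e_i \oplus \Wrap{\mathit{lib}}}{\TRUE}{}{}$. This is exactly the conclusion for the given $\mathit{lib}$. Since $\mathit{lib}$ is arbitrary, we then generalize over it.

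The only point that needs care is that $\Wrap{\mathit{lib}}$ is a legitimate value of the same kind as $\mathit{lib}$. Concretely, it must be a record (tuple) of AnerisLang functions implementing $\mathsf{init}$, $\mathsf{start}$, $\mathsf{read}$, $\mathsf{write}$ and $\mathsf{commit}$ with the signatures required in \Cref{sec:3.1}. The linking operator $\oplus$ should treat it like any other library value. I would check this directly from the definition of the instrumentation in \Cref{fig:instrumentation}. There, each wrapped operation is a closed function built from the original operation together with $\Tfresh$ and $\Temit$, which are ordinary expressions of the extended language with the trace component. The quantification over $\mathit{lib}$ in the hypothesis ranges over all values of the extended language, which includes the ghost primitives. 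So the instantiation is type-correct.

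There is also a logical subtlety. The hypothesis is stated in the Aneris/Iris logic that already includes the trace resources and rules such as \ruleref{Ht-emit} and \ruleref{Ht-fresh}. The quantifier over $\mathit{lib}$ is a meta-level (or Iris-level) universal, so no side condition about the library's internals is needed. The client proof never inspects $\mathit{lib}$, because all of its knowledge comes from the existentially packaged resources in $\Spec_{\TLevel}(\mathit{lib})$.

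I therefore expect no real obstacle; this lemma is the ``trivial'' half mentioned after \Cref{def:adequacy}. The step I would double-check is the one above: that the wrapped library lies within the range of the quantifier. In the mechanization, this means $\Wrap{\cdot}$ is defined as a function from library values to library values, and the client's triple is proven uniformly in the library value. If the hypothesis were instead stated only for libraries not using ghost primitives, I would first have to strengthen it to range over all values. All genuine difficulty is deferred to the companion lemma $\Spec_{\TLevel}(\mathit{lib}) \Rightarrow \Spec_{\TLevel}(\Wrap{\mathit{lib}})$.
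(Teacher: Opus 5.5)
Your proposal is correct and matches the paper's proof: introduce the antecedent and the quantified $\mathit{lib}$, then instantiate the hypothesis at $\Wrap{\mathit{lib}}$. Your extra check that $\Wrap{\mathit{lib}}$ lies in the range of the quantifier is a reasonable sanity check; the paper leaves it implicit by calling the statement a tautology of higher-order separation logic that does not depend on the wrap function.
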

\begin{proof}
  This statement is a tautology of higher-order separation logic.
  Thus, the proof is rather straightforward and does not depend on specifics of $\Spec_{\TLevel}$, any $e_i$, or the wrap function.
  We simply need to introduce the antecedent and the universal quantification over $\mathit{lib}$, and instantiate the antecedent with $\Wrap{\mathit{lib}}$.
\end{proof}

\begin{theorem}[Main Theorem]
  \label{def:lemma-wrap}
  Let $\TLevel$ be an isolation level and $\mathcal{L}_\TLevel \eqdef \{\Trace ~ | ~ \ValTrace{\TLevel}{\Trace} \}$, then 
  \begin{align*}
    \forall lib. ~ \mathit{spec}_{\TLevel}(\mathit{lib}) \wand \Ttrace{\varepsilon} \wand \TtraceInv{\mathcal{L}_\TLevel} \wand \mathit{spec}_{\TLevel}(\Wrap{\mathit{lib}}).
  \end{align*}
\end{theorem}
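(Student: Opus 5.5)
We prove \Cref{def:lemma-wrap} by constructing, from the resources that $\Spec_{\TLevel}(\mathit{lib})$ existentially provides, a new instantiation of the existentially quantified predicates. We then show that every per-operation (atomic) triple holds for the instrumented operations of \Cref{fig:instrumentation} under this instantiation. We focus on snapshot isolation; read uncommitted and read committed follow the same pattern with simpler commit tests.

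The first step is to allocate ghost state and a single Iris invariant $\mathcal{I}$. From $\Ttrace{\varepsilon}$ we allocate an authoritative ghost copy of the level-2 sequence $\LTrace$, of the extracted transaction set $\TTransactionSetOf$, and of the execution $\TExecution$, together with per-connection fragments recording the operations of the currently open transaction. We also keep a ghost map from keys to histories that is authoritatively tied to the underlying $\mapstoMem{\Key}{\Hist}$ of $\mathit{lib}$. The invariant $\mathcal{I}$ contains four things:
\begin{itemize}
\item $\Ttrace{\Trace}$;
\item the authoritative ghost elements;
\item the pure facts $\ValSeq{\LTrace}$, $\ValTrans{\TTransactionSetOf}$, $\CohTrace{\Trace}{\LTrace}$, $\CohTrans{\LTrace}{\TTransactionSetOf}$, $\ExecOf{\ComTrans{\TTransactionSetOf}}{\TExecution}$ and the commit tests;
\item the link invariant that the last state of $\TExecution$ equals $\HistValSymb$ applied pointwise to the current histories, with the length of each history recording how many committed transactions in $\TExecution$ wrote the key.
\end{itemize}
The last conjunct is what lets us relate history equality in $\CanCommitPredicateSymb$ to the absence of intervening writers in $\TExecution$.

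The new resources are defined as follows:
\begin{itemize}
\item $\mapstoMem{\Key}{\Hist}$ is the old global points-to together with a ghost fragment agreeing on $\Hist$.
\item $\mapstoCache{\Key}{\Cst}{\ValueOption}$ is the old local points-to together with a fragment recording $\Cst$'s writes so far.
\item $\ConnectionStateSymb(\Cst,\ActiveSymb(\Snapshot))$ additionally stores the index $i$ in $\TExecution$ at which the snapshot $\Snapshot$ was taken, plus the list of linearization points of the open transaction.
\end{itemize}

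Each wrapped triple is proven by the same routine:
\begin{enumerate}
\item Open $\mathcal{I}$ around $\Tfresh$ and use \ruleref{Ht-fresh}. Appending a pre event preserves $\ValTrace{\TLevel}{\cdot}$, since $\CohTrace{}{}$ only constrains completed pairs and pre events without linearization points.
\item Apply the old atomic triple. At its linearization point, open $\mathcal{I}$ (\ruleref{Inv-atomic}) and append the corresponding operation to $\LTrace$ and to the open transaction. Ordering condition (1) of $\CohTrace{}{}$ holds because the pre event is already in $\Trace$ and the post event is not yet.
\item Close $\mathcal{I}$, then open it again for $\Temit$ via \ruleref{Ht-emit}. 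The post event now matches the linearization point just inserted.
\end{enumerate}
For $\mathsf{read}$, the returned value equals the local points-to, which by construction is either the snapshot value or the latest own write. This gives the ``reads its most recent write'' clause of $\ValTrans{}$ and, for $\mathsf{NoPriorWrite}$ reads, the $\mathsf{Complete}$ property relative to the stored snapshot index. For $\mathsf{start}$, we record the current length of $\TExecution$ as the snapshot index.

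The main obstacle is the $\mathsf{commit}$ linearization point. In the successful case we extend $\TExecution$ with the new state obtained by applying the transaction's writes. This agrees with $\CommitHist{p}{\Hist}$ through the link invariant. We must then discharge $\TComTestSI{\TTransaction}{\TExecution}$.
\begin{itemize}
\item $\mathsf{Complete}$ follows from the stored snapshot index $i$ and the snapshot-read facts.
\item For $\mathsf{NoConf}$, let $\TTransaction'$ be committed at $i<i'<j$ with $\THasWrite{\TTransaction}{\Key}$. Then $\CanCommitPredicate{\map}{\Snapshot}{\Cache}$ gives $\MapLookup{\map}{\Key}=\MapLookup{\Snapshot}{\Key}$ as histories. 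Since histories only grow by appending at each writing commit, equal histories imply that no commit between $i$ and $j$ wrote $\Key$.
\end{itemize}
This monotonicity argument is why histories rather than values are needed, and it is where the second part of the link invariant is essential. Previously committed transactions keep their commit tests, because appending to $\TExecution$ only adds later indices. In the aborting case, $\TExecution$ is unchanged and the transaction is marked aborted, so it is excluded from $\ComTrans{\TTransactionSetOf}$. Finally, $\mathsf{init}$ and the frame resources are passed through unchanged.
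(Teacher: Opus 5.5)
Your proposal is correct and follows essentially the paper's route: re-instantiate the existentially quantified resources of $\Spec_{\TLevel}$ with wrapped versions carrying ghost state, keep $\Ttrace{t}$ together with the reflected linearization points, transactions and execution in one internal invariant, and update it at $\Tfresh$, at the linearization point exposed by the atomic triple, and at $\Temit$ (your history-counting argument for $\mathsf{NoConf}$ fills in the commit case that the paper defers to its mechanization). The paper packages the bookkeeping you leave implicit (the exclusive per-tag tokens $\PerLin(\tau)$, $\PerPost(\tau)$ and the monotone $\PrefixReflect$ witness) into a reusable ghost theory around $\mathsf{Reflect}_{\TLevel}(t, \LTrace)$, and it attaches the per-key read facts to the local points-to as $\mathsf{LatestWrite}(k, \mathit{ov}, c)$; your per-key read information must live there too, not in $\ConnectionStateSymb$, since $\mathsf{read}$ and $\mathsf{write}$ never receive $\ConnectionStateSymb$.
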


With \Cref{def:lemma-triple} and \Cref{def:lemma-wrap} formally stated, we first present a sketch of the proof of \Cref{def:adequacy} before proceeding to discuss the proof of \Cref{def:lemma-wrap}.

\begin{proof}[A sketch of the proof of \Cref{def:adequacy}]
  In \Cref{def:adequacy}, we get to assume that the Hoare triples 
  $\forall \mathit{lib}, \anhoare{\Spec_{\TLevel}(\mathit{lib})}{e_i \oplus \mathit{lib}}{\TRUE}{}{}$ hold for all client programs $e_1 \dots e_n$.
  This, by \Cref{def:lemma-triple} allows us to conclude $\forall \mathit{lib}, \anhoare{\Spec_{\TLevel}(\Wrap{\mathit{lib}})}{e_i \oplus \Wrap{\mathit{lib}}}{\TRUE}{}{}$ for $e_1 \dots e_n$.
  Now, to finish the proof of \Cref{def:adequacy}, we need to show the preconditions of these Hoare triples hold.
  This is because the adequacy theorem of \citet{free-theorems}, roughly speaking, states that if we prove Hoare triples for our programs, and we prove their preconditions, then these programs are safe to run, \ie{}, they do not crash, and that all invariants hold throughout the execution.
  Furthermore, the adequacy of \citet{free-theorems} allows us to pick any $\mathcal{L}$, and only requires the Hoare triples to be proven under the assumption that we have $\Ttrace{\varepsilon}$ and $\TtraceInv{\mathcal{L}}$.
  Thus, what remains is to show the preconditions, \ie{}, $\Spec_{\TLevel}(\Wrap{\mathit{lib}})$, but this follows from \Cref{def:lemma-wrap} because \citet{SI-placeholder} have proven $\mathit{spec}_{\TLevel}(\mathit{lib})$ --- our Rocq mechanization builds on that of \citet{SI-placeholder}.
\end{proof}

The proof of \Cref{def:lemma-wrap} is what we discuss in the rest of this section and the next.
The proof of $\Spec_{\TLevel}(\Wrap{\mathit{lib}})$ in 
\Cref{def:lemma-wrap} follows the structure imposed of $\Spec_{\TLevel}$ that we presented in \Cref{sec:3.1}: 
we need to show the existence of a number of predicates and that all operations satisfy their specifications for these predicates.
As the operations of $\Wrap{\mathit{lib}}$ are the wrapped versions of the operations in $\mathit{lib}$ using the code instrumentation displayed in \Cref{fig:instrumentation}, 
a part of proving the Hoare triples for these operations is showing that the events emitted into the global trace $\Trace$, tracked using the resource $\Ttrace{t}$, satisfy the trace invariant $\TtraceInv{\mathcal{L}_\TLevel}$, \ie{}, $t \in \mathcal{L}_\TLevel$, at all times.
Remember that the trace invariant encapsulates the extraction of \Cref{sec:4} that 
we defined with the predicate $\ValTrace{\TLevel}{\Trace}$ as $\mathcal{L}_\TLevel$.

While the proof of \Cref{def:lemma-wrap} varies for each isolation level for a number of reasons, such as the difference in
per-operation specifications and the commit test, one of the main difficulties all isolation levels deal with is the insertion of linearization points.
Specifically, insertions into the sequence at level 2 in between the code instrumentation emitting the pre and post events at level 1. To this end, we set up a so-called \emph{ghost theory}.
A ghost theory consists of a number of propositions defined in terms of ownership of separation logic resources and a number \emph{laws} that govern them; see \citep{irisjournal} for details. 
The ghost theory we introduce for handling insertion of linearization points introduces predicates $\mathsf{Reflect_\TLevel}(t, \LTrace)$, $\PerLin(\tau)$, and $\PerPost(\tau)$.
The laws of our ghost theory are displayed in \Cref{fig:ghost-theory}.
The predicate $\mathsf{Reflect_\TLevel}(t, \LTrace)$ reflects the trace of pre and post events $t$ and sequence of linearization points $\LTrace$ to the level of the logic using separation logic resources.
The definition of $\mathsf{Reflect_\TLevel}(t, \LTrace)$ varies for each isolation level, but the laws in \Cref{fig:ghost-theory} are common laws that hold for all isolation levels.
The details of how $\mathsf{Reflect_\TLevel}(t, \LTrace)$ is defined are not important here. What matters is that it allows us to reason about traces, linearization points, sets of transactions and executions in our logic --- the sets of transactions and executions are existentially quantified inside $\mathsf{Reflect_\TLevel}(t, \LTrace)$. 
When proving the per-operation specification of $\Spec_{\TLevel}(\Wrap{\mathit{lib}})$ in \Cref{def:lemma-wrap},
we take the $\Ttrace{t}$ resource, tracking the global proof trace, and create an internal 
proof invariant
\begin{align*}
  \Inv{\exists t, \LTrace.\; \Ttrace{t} \ast \mathsf{Reflect_\TLevel}(t, \LTrace)}
\end{align*}
together with $\mathsf{Reflect_\TLevel}(t, \LTrace)$.
This invariant is available to us when we reason about emitting events
into the global trace 
that the instrumented operations do. When the global trace changes, we thus also have 
to update $\mathsf{Reflect_\TLevel}(t, \LTrace)$ using the laws of our ghost theory.
Now, the point of having $\mathsf{Reflect_\TLevel}(t, \LTrace)$ in an invariant together with $\Ttrace{t}$ 
is that $\mathsf{Reflect_\TLevel}(t, \LTrace)$ implies $ \ValTrace{\TLevel}{\Trace}$ (by \ruleref{Law 1})
which allow us to show that $t$ is in $\mathcal{L}_\TLevel$ at all times.
\begin{figure}[h]
  \vspace{-3mm}
  \begin{mathpar}
    \axiomH{Law 1}
    {\mathsf{Reflect_\TLevel}(t, \LTrace) \proves 
     \mathsf{Reflect_\TLevel}(t, \LTrace) \ast \ValTrace{\TLevel}{\Trace}}
    \and
    \inferH{Law 2}
    {\tau \not\in \TTags(t) \and \PreFunc(\langle\tau, \val\rangle)}
    {\mathsf{Reflect_\TLevel}(t, \LTrace) 
     \proves \mathsf{Reflect_\TLevel}(t \cdot \langle\tau, \val\rangle, \LTrace) \ast \PerLin(\tau)}
    \and
    \inferH{Law 3}
    {\PostFunc(\langle\tau, \val\rangle) \and 
     \CohTrace{t \cdot \langle\tau, \val\rangle}{\LTrace}}
    {\mathsf{Reflect_\TLevel}(t, \LTrace) \ast \PerPost(\tau) \proves 
     \mathsf{Reflect_\TLevel}(t \cdot \langle\tau, \val\rangle, \LTrace)}
    \and
    \axiomH{Law 4}
    {\mathsf{Reflect_\TLevel}(t, \LTrace) \ast \PerLin(\tau) \proves 
     \mathsf{Reflect_\TLevel}(t, \LTrace) \ast \PerLin(\tau) \ast \tau \not\in \TTags(\LTrace)}
    \and
    \axiomH{Law 5}
    {\mathsf{Reflect_\TLevel}(t, \LTrace) \ast \PerPost(\tau) \proves 
     \mathsf{Reflect_\TLevel}(t, \LTrace) \ast \PerPost(\tau) \ast \neg (\exists event. ~ \TTag(event) \in \TTags(t))}
    \and
    \axiomH{Law 6}
    {\mathsf{Reflect_\TLevel}(t, \LTrace) \proves \mathsf{Reflect_\TLevel}(t, \LTrace) \ast \PrefixReflect(t, \LTrace)}
    \and
    \axiomH{Law 7}
    {\mathsf{Reflect_\TLevel}(t, \LTrace) \ast \PrefixReflect(t, \LTrace) \proves 
     \mathsf{Reflect_\TLevel}(t, \LTrace) \ast \PrefixReflect(t', \LTrace') \ast t \leq t' \ast \LTrace \leq \LTrace'}
  \end{mathpar}
  \caption{Excerpt of ghost theory.}
  \label{fig:ghost-theory}
\end{figure}
When the instrumented database operations insert a pre event or a post event into the global trace, 
we need to update $\mathsf{Reflect_\TLevel}(t, \LTrace)$ accordingly.
These updates can be made using \ruleref{Law 2} and \ruleref{Law 3}, respectively.
As the code instrumentation, \ie the $\Temit$ and $\Tfresh$ expressions, are atomic operations in Iris, we can open the internal proof invariant 
around them --- an atomic operation is one that steps to a value in a single 
step of the operational semantics; see \Cref{fig:ghost_code_op}.
Thus, we can update $\Ttrace{t}$ with the 
code instrumentation specifications in \Cref{fig:ghost_code_ht}, 
and update $\mathsf{Reflect_\TLevel}(t, \LTrace)$ using the ghost theory laws.
When inserting the pre event with \ruleref{Law 2}, we gain a resource $\PerLin(\tau)$.
This is an exclusive permission to update the sequence of linearization points at level 2 by inserting a linearization point with tag $\tau$. 
Having $\PerLin(\tau)$, using \ruleref{Law 4}, we can ensure that no other linearization point has been inserted with tag $\tau$.
The insertion of linearization points into the sequence $\LTrace$ at level 2 of the extraction 
requires operation and isolation level specific reasoning to update $\mathsf{Reflect_\TLevel}(t, \LTrace)$.
The reason for this is that the insertion of the linearization point of operations enforces changes in
the set of transactions and execution that $\mathsf{Reflect_\TLevel}(t, \LTrace)$ 
tracks, and which it uses to establish $\ValTrace{\TLevel}{\Trace}$ in \ruleref{Law 1}.
In \Cref{sec:6}, we show how $\mathsf{Reflect_\TLevel}(t, \LTrace)$ can be updated 
with a linearization point for the $\mathsf{read}$ operation in snapshot isolation.
This involves the creation of new resources to instantiate the existentially quantified resources in $\Spec_{\TLevel}(\Wrap{\mathit{lib}})$.
Recall that it is due to the $\mathsf{read}$ operation being specified with an atomic Hoare triple
that we can open our internal proof invariant and insert a linearization point at level 2:
An atomic Hoare triple specification asserts that the operation is safe to execute and exposes the linearization point of the program at which point invariants can be accessed. 
When inserting linearization points, 
we also need to establish that the pre event for the linearization point we are inserting has 
been emitted to the trace $\Trace$.
This is done by creating a $\mathsf{Prefix}$ resource using \ruleref{Law 6} at the time when the pre event is inserted by the code instrumentation. 
Then, using \ruleref{Law 7}, when inserting the linearization point we can 
assert that the pre event is already in the trace of events at level 1. 
A similar reasoning applies when the code instrumentation inserts the post event; we need 
to show the existence of a corresponding linearization point in the global trace.

In the next section, we give more details of how a per-operation specification of $\Spec_{\TLevel}(\Wrap{\mathit{lib}})$
is proven for a particular operation, \ie{}, the $\mathsf{read}$ operation 
in snapshot isolation.


%% file: sections/proof.tex
\section{Proof Sketch: Transactional Library Implements Transactional Model}
\label{sec:6}

In this section, we present a proof of the read operation part of \Cref{def:lemma-wrap} for the snapshot isolation level using the ghost theory from \Cref{sec:5}.
The proof for other operations and other isolation levels are similar in nature; see our Rocq mechanization for more details.
We begin by discussing the important point that no
client can distinguish between $\mathit{lib}$ and $\Wrap{\mathit{lib}}$, because all the client relies on is that $\Spec_{\mathit{SI}}$ holds, and $\Spec_{\mathit{SI}}$ existentially quantifies all the resources that are used in the specification of operations.
To understand this subtle point, let us look at the specification of
the $\mathsf{read}$ operation in $\Spec_{\mathit{SI}}$ for snapshot isolation:
{
\begin{align}
  \anhoareatomic
    { \begin{array}[t]{@{}l@{}}
      \mapstoCache{\Key}{\Cst}{\ValueOption}
      \end{array}
    }
    { \begin{array}[t]{@{}l@{}}
      \SIrdC{\Cst}{\Key}
      \end{array}
    } 
    { \begin{array}[t]{@{}l@{}}
      \Ret \ValueOption. 
      \mapstoCache{\Key}{\Cst}{\ValueOption}
    \end{array}}
    {}{}
    \tag{SI-Read-Spec}
    \label{fig:atomic_read_spec}
\end{align}
}

\noindent
Intuitively, $\mapstoCache{\Key}{\Cst}{\ValueOption}$ is a logical proposition (resource)
stating that the key $\Key$ has value $\ValueOption$ in the (current) connection
$c$. That is, either $\ValueOption$ is the value of the last write to $\Key$ in
the current transaction of $\Cst$, or there is no prior write to $\Key$ in the
current transaction of connection $\Cst$, and $\ValueOption$ is the value of $\Key$ at the moment
(snapshot) when the transaction started. However, a proof of a client does
not in any way rely on the exact definition of the predicate
$\mapstoCache{\Key}{\Cst}{\ValueOption}$; to prove a client correct we \emph{only}
need to know that \eqref{fig:atomic_read_spec} holds, for some abstract definition
of $\mapstoCache{\Key}{\Cst}{\ValueOption}$. 
This is the parametric property of the specification structure we highlighted in \Cref{sec:3.2}, \ie{} 
the fact that the resources of the specification are existentially quantified.
Thus, we can change the
underlying definition of $\mapstoCache{\Key}{\Cst}{\ValueOption}$ in
$\Spec_\mathit{SI}$ without affecting the proof of the clients that are verified against it.
This is what our resource wrapping $\Wrapres{\mapstoCache{\Key}{\Cst}{\ValueOption}}$ does: it changes the underlying definition of
all the predicates involved in $\mapstoCache{\Key}{\Cst}{\ValueOption}$ without
changing its ``interface''. That is, in
$\Spec_\mathit{SI}(\Wrap{\mathit{lib}})$, the specification for the $\mathsf{read}$
operation is as follows:
{
\begin{align}
  \anhoareatomic
    { \begin{array}[t]{@{}l@{}}
      \Wrapres{\mapstoCache{\Key}{\Cst}{\ValueOption}}
      \end{array}
    }
    { \begin{array}[t]{@{}l@{}}
      \Wrap{\SIrdC{\Cst}{\Key}}
      \end{array}
    } 
    { \begin{array}[t]{@{}l@{}}
      \Ret \ValueOption. 
      \Wrapres{\mapstoCache{\Key}{\Cst}{\ValueOption}}
    \end{array}}
    {}{}
    \tag{SI-Read-Spec-Wrapped}
    \label{fig:atomic_read_spec_wrapped}
\end{align}
}

\noindent
where $\Wrap{\SIrdC{\Cst}{\Key}}$ is exactly as shown in \Cref{fig:extraction_method}.
Thus, to establish \Cref{def:lemma-wrap}, we need to show that \eqref{fig:atomic_read_spec} implies \eqref{fig:atomic_read_spec_wrapped}, and similarly for the other database operations: $\mathsf{init}$, $\mathsf{start}$, $\mathsf{write}$, and $\mathsf{commit}$.
In this section, we focus on the $\mathit{read}$ operation and refer the reader to the Rocq formalization for details about the other operations.
The proof obligation for the $\mathit{read}$ operation is formally stated as follows 
(note that the original read specification \eqref{fig:atomic_read_spec} is included in the assumption $\Spec_\mathit{SI}(lib_\mathit{SI})$):
\begin{align}
  & \Spec_\mathit{SI}(lib_\mathit{SI}) \ast \TtraceInv{\mathcal{L}_\mathit{SI}} \ast \tag{Read-Implies-Wrapped} \label{eq:read-implication}
  \Inv{\exists t, \LTrace.\; \Ttrace{t} \ast \mathsf{Reflect_\mathit{SI}}(t, \LTrace)} \\ 
  & \vdash \anhoareatomic
      {
        \begin{array}[t]{@{}l@{}}
          \Wrapres{\mapstoCache{\Key}{\Cst}{\ValueOption}} 
          \end{array}
      }
      {\Wrap{\mathsf{read} ~ c ~ k}}
      {
        \begin{array}[t]{@{}l@{}}
          \Ret \ValueOption. 
          \Wrapres{\mapstoCache{\Key}{\Cst}{\ValueOption}} 
        \end{array}
      }
      {}{}
      \notag
\end{align}
Notice here that the internal proof invariant 
$\Inv{\exists t, \LTrace.\; \Ttrace{t} \ast \mathsf{Reflect_\mathit{SI}}(t, \LTrace)}$
that we presented in \Cref{sec:5} is included in the assumptions, as is also the trace invariant $\TtraceInv{\mathcal{L}_\mathit{SI}}$
asserting that the global trace $t$ must remain in $\mathcal{L}_\mathit{SI}$, \ie $t \in \mathcal{L}_\mathit{SI}$
(recall $\mathcal{L}_\mathit{SI} \eqdef \{\Trace ~ | ~ \ValTrace{\mathit{SI}}{\Trace} \}$).
To make sure the global trace stays in $\mathcal{L}_\mathit{SI}$ during the proof of \eqref{eq:read-implication}, 
the crucial point is to get the resource wrapping $\Wrapres{\mapstoCache{\Key}{\Cst}{\ValueOption}}$ right.
It must be strong enough to show all the properties 
that we list below about the set of transactions $\TTransactionSetOf$ and execution $\TExecution$.
As we explained in \Cref{sec:5}, $\TTransactionSetOf$ and $\TExecution$ are existentially quantified in $\mathsf{Reflect_\mathit{SI}}(t, \LTrace)$.
The proposition $\mathsf{Reflect_\mathit{SI}}(t, \LTrace)$ ensures that the set of transactions $\TTransactionSetOf$ and execution $\TExecution$ that it tracks internally are coherent with $t$ and $\LTrace$ such that $\ValTrace{\mathit{SI}}{\Trace}$ can be satisfied --- this is why we can prove $\ValTrace{\mathit{SI}}{\Trace}$ in \ruleref{Law 1}.
The proposition $\Wrapres{\mapstoCache{\Key}{\Cst}{\ValueOption}}$ must be defined in such a way as to ensure that the following properties hold for the set of transactions $\TTransactionSetOf$ and execution $\TExecution$ that $\mathsf{Reflect_\mathit{SI}}(t, \LTrace)$ tracks internally:
\begin{enumerate}
  \item There must exist a $\mathit{start}$ linearization point with connection c in $\LTrace$ after which there is no $\mathit{commit}$ linearization point
  with connection c.
  \label{eq:listing-1}
  \item If the operation is not the first of the current transaction, then there is an open transaction in $\TTransactionSetOf$ holding the previous operations. 
  \label{eq:listing-2}
  \item If there is a previous $\TW(\mathit{id}, k, v)$ operation in the current transaction, then $\mathit{ov} = \Some v$, otherwise $\mathit{ov}$ is the value of key $k$ in a state 
  $s$ corresponding to the current transaction's snapshot in $\TExecution$. 
  \label{eq:listing-3}
\end{enumerate}

\noindent To create the wrapping $\Wrapres{\mapstoCache{\Key}{\Cst}{\ValueOption}}$,
we define a proposition $\mathsf{LatestWrite}(k, \mathit{ov}, c)$ using separation logic resources and extend our ghost theory.
This proposition works in conjunction with 
$\mathsf{Reflect_\mathit{SI}}(t, \LTrace)$ and lets us establish the properties listed above. 
With $\mathsf{LatestWrite}(k, \mathit{ov}, c)$, we can prove \ruleref{Law 8} in our ghost theory which we use when we insert the linearization point for the $\mathit{read}$ operation.
The wrapping of $\Wrapres{\mapstoCache{\Key}{\Cst}{\ValueOption}}$ is then defined to hold $\mathsf{LatestWrite}(k, ov, c)$ 
together with the original non-wrapped version of the resource $\mapstoCache{\Key}{\Cst}{\ValueOption}$ from $\Spec_\mathit{SI}(\mathit{lib}_\mathit{SI})$:
\begin{align}
  \Wrapres{\mapstoCache{\Key}{\Cst}{\ValueOption}} \eqdef{} \mapstoCache{\Key}{\Cst}{\ValueOption} \ast 
  \mathsf{LatestWrite}(k, \mathit{ov}, c)
  \tag{Wrapped-Key} \label{eq:wrapped-key}
\end{align}
\noindent Now, by \eqref{eq:wrapped-key} above, we can show the
implication \eqref{eq:read-implication} using our combined ghost theory of \Cref{fig:ghost-theory} and \ruleref{Law 8}. 
\begin{figure}[h]
    \vspace{-4mm}
    \begin{mathpar}
    \inferH{Law 8}
    {\mathsf{Reflect_\mathit{SI}}(t, \LTrace) \ast \PerLin(\tau) \ast \mathsf{LatestWrite}(k, \mathit{ov}, c)}
    {\mathsf{Reflect_\mathit{SI}}\big(t, \LTrace \cdot \mathsf{\TR}((\tau, c), k, \mathit{ov})\big) \ast 
     \PerPost(\tau) \ast \mathsf{LatestWrite}(k, \mathit{ov}, c)}
    \end{mathpar}
    \vspace{-7mm}
\end{figure}
As highlighted in \Cref{sec:5}, we need to take care of the fact that the wrapped code contains 
code instrumentation
which inserts pre- and post-events into the global trace. 
Thus, to ensure that the global
trace remains within $\mathcal{L}_\mathit{SI}$, we need to insert a linearization point into
the trace of linearization points (see condition
\eqref{item:pre-pos-must-have-lin} of $\CohTrace{\Trace}{\LTrace}$ on page
\pageref{item:pre-pos-must-have-lin}). Crucially, the Hoare triple
\eqref{fig:atomic_read_spec} is a logically atomic Hoare triple.
The upshot of this is that (the specification of) the
$\mathsf{read}$ operation can be treated atomically, \ie we can access invariants
when reasoning about the $\mathsf{read}$ operation --- recall that this access to invariants in practice takes place around the linearization point of the expression for which we have the atomic Hoare triple. Thus, before (the linearization point
of) the call to $\SIrdC{\Cst}{\Key}$ inside $\Wrap{\SIrdC{\Cst}{\Key}}$, we can
assume that
$\exists t, \LTrace.\; \Ttrace{t} \ast
\mathsf{Reflect_\mathit{SI}}(t, \LTrace)$ holds, and we
need to reestablish it afterwards. This allows us to insert a linearization
point for the $\mathsf{read}$ operation into $\LTrace$, which in turn means that
when the code instrumentation inserts the post event, the trace remains coherent, \ie{},
$\CohTrace{\Trace}{\LTrace}$ continues to hold. 
As we get to assume the wrapped resource $\Wrapres{\mapstoCache{\Key}{\Cst}{\ValueOption}}$
around the insertion of the linearization point, we can split $\Wrapres{\mapstoCache{\Key}{\Cst}{\ValueOption}}$
into the non-wrapped version $\mapstoCache{\Key}{\Cst}{\ValueOption}$, to satisfy the precondition of $\SIrdC{\Cst}{\Key}$, 
and the resource we added $\mathsf{LatestWrite}(k, \mathit{ov}, c)$, 
to update $\mathsf{Reflect_\mathit{SI}}(t, \LTrace)$ using \ruleref{Law 8} of our ghost theory.

To summarize, the proof of \Cref{def:lemma-wrap}, which we use to establish \Cref{def:adequacy}, amounts to doing what we have done
in this section for the $\mathsf{read}$ operation for all the operations of the database interface. 
The complexity here lies in coming up with wrapped resource definitions that let 
us raise per-operation specifications to the level of abstraction of a global execution.
The crux of the difficulty, 
stems from all the layers of our extraction method in \Cref{fig:extraction_method} being tightly connected  
through the coherence predicates of the global proof invariant. 
Updating one layer, using the reflection in the logic ($\mathsf{Reflect_\mathit{SI}}(t, \LTrace)$), requires 
reestablishing assertions about all the other layers internal to $\mathsf{Reflect_\mathit{SI}}(t, \LTrace)$.
In fact, the proof of \Cref{def:lemma-wrap} consists of several thousand lines of Rocq proof code
for \emph{each} isolation level.


%% file: sections/related_work.tex
\section{Related Work}
\label{sec:7}

Recent work on the higher-order separation logic
Trillium \citep{Trillium}, which is based on Iris,
developed a method for showing that a labeled state transition systems is 
refined by an executable implementation. 
While Trillium is language generic, implementations can be written 
in AnerisLang \citep{DBLP:conf/esop/Krogh-Jespersen20}. 
In particular, they prove that AnerisLang implementations of
two-phase commit and single-degree Paxos refine their respective \TLA models. 
The refinement is then used to carry over properties to the implementations 
that were established at the model level in \TLA.
A natural question to ask is why we did not use Trillium to reason about the 
state-based model for database implementations, as the state-based model 
has been embedded in \TLA \citep{Soethout2021}.
The reason is that with Trillium, we can gain that a database implementation refines the 
model, but we do not get a main theorem, such as \Cref{def:adequacy}, stating that any 
client interaction will produce a valid set of transactions.
The \TLA model of \citet{Soethout2021} has to be model-checked against each
workload one wants to verify which quickly leads to state space explosion.

Much work is done on verifying distributed systems using separation logic 
without making connections to other models. 
Also in the context of AnerisLang,
\citet{abel/crdts, nieto_et_al:LIPIcs.ECOOP.2023.22} builds a framework 
for conflict-free replicated data types allowing users to define 
highly-available distributed applications, and 
\citet{DBLP:journals/pacmpl/GondelmanGNTB21} creates a distributed non-transactional 
key-value store guaranteeing causal consistency. 
Using Grove \citep{grove}, a distributed separation logic with support for reasoning
about node crashes built on the Iris framework, a distributed key-value store with primary backup replication, 
based upon a replicated state machine library, is verified \citep{grove}.
As the isolation level specifications of \citet{SI-placeholder} have been formalized in Aneris 
(Aneris is the program logic for reasoning about AnerisLang), 
the Aneris instantiation of Iris is the choice in this paper, 
but all of our techniques will also work in other instantiations of Iris such as Grove.

There is a line of work on creating databases with so-called \emph{verifiable transactions} 
\citep{zhao2023veritxn, xia2022litmus}. That is, the database produces at runtime a proof that it does  
not break guarantees such as isolation levels for transactions. 
This method naturally comes with runtime overhead, and it is also not guaranteed that the verification mechanisms 
inside the databases are implemented correctly. 


%% file: sections/conclusion.tex
\section{Conclusion and Future Work}
\label{sec:8}
This paper presents the first method for formally verifying that a database correctly implements  
its isolation level,
by lifting Hoare triples specifications to transactional consistency models using 
invariants on traces.
We have lifted specifications of the weak isolation levels 
read uncommitted, read committed and snapshot isolation, thus these results 
serve as free theorems for all future verification efforts; 
by verifying that a database implements the separation logic specifications, 
it automatically implements its isolation level. 

In the future, it will be interesting to see our approach
applied to more isolation levels than we do in this paper. 
There exists a plethora of other levels, such as \emph{opacity} \citep{guerraoui2008correctness} 
and \emph{virtual world consistency} \citep{imbs2012virtual},
which are popularized in software transactional memory systems and 
restricts the behavior of ongoing transactions,
or variants of snapshot isolation with intricate behavior such as 
\emph{parallel} snapshot isolation
\citep{cerone2015framework, sovran2011transactional}.
Currently, these isolation levels do not have separation logic specifications, 
which, of course, is a prerequisite for using our approach.
We do not expect the extension to other isolation levels to impose a major challenge, 
but it could involve new challenges to extend the approach to other database 
APIs; currently we model a key-value store, but SQL compliant systems 
support a much more complex API. The predominant transactional models we have used in this paper, 
do not fully support SQL: \citet{Crooks17} is limited to a key-value store while 
\citet{Adya99} does support \emph{predicate}-based operations that can be used to model features of SQL.


%% file: sections/example.tex
\section{Proving Safety of Examples}
\label{sec:example}

In this section, we show how to use the per-operation specifications to prove safety of the \emph{write skew} example. 
In \Cref{fig:write-skew-example-program}, we display the code of the example together with the invariant we use. 
The vertical bars separate two clients program each running their transaction of the \emph{write skew} example.
Initially, when using the snapshot isolation specification, we obtain initial points-to resources for the keys $x$ and $y$: 
$\mapstoMem{x}{\HistEmpty}$ and $\mapstoMem{y}{\HistEmpty}$. 
With these resources, we create the invariant $\knowInv{}{\exists h_x h_y,~ \mapstoMem{x}{h_x} \ast \mapstoMem{y}{h_y}}$.
The proof of each client transaction can be done in parallel --- this is a feature of concurrent/distributed separation logics.
As the invariant is a persistent resource, \ie it does not assert exclusive ownership, we can use it when proving 
both clients.
We go through the proof of the first transaction; the proof of the other transaction is similar.
Having initialized a connection using the operation $\mathsf{init}$ (this is not shown as part of \Cref{fig:write-skew-example-program}), 
the client holds the $\CanStart{\Cst}$ resource as specified in \ruleref{si-init-client-spec}. 
For the precondition of the $\mathsf{start}$ operation in \ruleref{si-start-spec}, we need the global points to resources. 
Because the $\mathsf{start}$ operation is specified using an atomic Hoare triple, we can make use of the rule in \ruleref{Inv-atomic}, 
and get access to the global points-to resources $\mapstoMem{x}{h_x} \ast \mapstoMem{y}{h_y}$ from inside the invariant.
Using the global points-to resources and the $\CanStart{\Cst}$ resource,
we can satisfy the precondition of the start specification. 
In the postcondition, we get back the unchanged global points to resources $\mapstoMem{x}{h_x} \ast \mapstoMem{y}{h_y}$, 
which we use to reestablish the invariant, and the resources 
$\Active{\Cst}{\{x \mapsto h_x; y \mapsto h_y\}} \ast \mapstoCache{x}{\Cst}{\HistVal{h_x}} \ast \mapstoCache{y}{\Cst}{\HistVal{h_y}} \ast \KeyUpdStatus{\Cst}{x}{\FALSE} \ast \KeyUpdStatus{\Cst}{y}{\FALSE}$, 
which we hold onto locally for the rest of the proof.
The body of the transaction, the $\mathsf{read}$ operation $"\SIrdC{y}"$ followed by the $\mathsf{write}$ operation $"\SIwrC{x}{1}"$, changes 
our local resources to 
$\Active{\Cst}{\{x \mapsto h_x; y \mapsto h_y\}} \ast \mapstoCache{x}{\Cst}{\Some 1} \ast \mapstoCache{y}{\Cst}{\HistVal{h_y}} \ast \KeyUpdStatus{\Cst}{x}{\TRUE} \ast \KeyUpdStatus{\Cst}{y}{\FALSE}$
through the \ruleref{si-read-spec} and \ruleref{si-write-spec} specifications;
the $\mathsf{read}$ operation make no changes to the resources, while the $\mathsf{write}$ operation is responsible 
for the update reflecting that the value 1 is written to the key $x$.
Notice how we do not need to access the resources of the invariant to reason about the body of the transaction. 
This is unlike the last operation, the $\mathsf{commit}$ operation, at which we again open the invariant 
to access the current state of the database: $\mapstoMem{x}{h_x}' \ast \mapstoMem{y}{h_y}'$, for some histories $h_x'$ and $h_y'$.
Together with our local resources, this is enough to satisfy the precondition of \ruleref{si-commit-spec}.
Whether the global points-to resources are updated or not, as part of the postcondition of \ruleref{si-commit-spec}, 
do not matter because we can always satisfy our invariant $\knowInv{}{\exists h_x h_y,~ \mapstoMem{x}{h_x} \ast \mapstoMem{y}{h_y}}$ 
since the histories are existentially quantified. This concludes the proof of the \emph{write skew} example. 
\begin{figure}[h]
  \vspace{-3mm}
    \centering
    \begin{minipage}[b]{0.5\textwidth} 
    \begin{align*}
    \left.
    \begin{aligned}
      & \SIstart\\
      & \SIrdC{y}\\
      & \SIwrC{x}{1}\\
      & \SIcommit
    \end{aligned}
    ~\middle\Vert~
    {
    \begin{aligned}
      & \SIstart\\
      & \SIrdC{x}\\
      & \SIwrC{y}{1}\\
      & \SIcommit
    \end{aligned}
    }
    \right.
    \end{align*}
    \begin{align*}
      Inv &\eqdef{} \knowInv{}{\exists h_x h_y,~ \mapstoMem{x}{h_x} \ast \mapstoMem{y}{h_y}} 
    \end{align*}
    \caption{\emph{write skew} example program.}
    \label{fig:write-skew-example-program}
  \end{minipage}
\end{figure}

Proving safety of the \emph{write skew} example, does little to strengthen our assurance 
that the separation logic specifications correctly captures snapshot isolation;
snapshot isolation is known for allowing both transactions in the \emph{write skew} example to commit, 
as it is impossible for the two transactions to have a write conflict.
Thus, if we assert that both transaction always commit, 
we capture that no write conflicts can happen and therefore 
every execution is valid under snapshot isolation. 
Concretely, this amounts to proving a variant of the example in \Cref{fig:write-skew-example-program} 
with both $\mathsf{commit}$ operations are wrapped in an assert statement: $\SIassert{\SIcommit}$. 
This proof is more involved than the one we did above
as we must create an invariant strong enough to show 
that both transactions commit.  
Specifically, when both transactions try to commit, we need enough information to prove that 
when the $\SIcommit$ operation returns $\FALSE$, the proposition $\lnot \CanCommitPredicate{\map}{\Snapshot}{\Cache}$
does not hold.
Recall that the $\CanCommitPredicateSymb$ predicate is used to capture when there are no write conflicts, 
thus the negation $\lnot \CanCommitPredicate{\map}{\Snapshot}{\Cache}$ states that there is a write conflict.
To have enough information to derive a contradiction from $\lnot \CanCommitPredicate{\map}{\Snapshot}{\Cache}$, 
we use the following invariant:
\begin{align*}
  Inv &\eqdef{} \knowInv{}{\exists h_x h_y,~ 
  \mapstoMem{x}{h_x} \ast \mapstoMem{y}{h_y} \ast 
  \big(h_x = \HistEmpty \lor (h_x = \HistNonEmpty{1} \ast \FinishToken_1)\big) \ast
  \big(h_y = \HistEmpty \lor (h_y = \HistNonEmpty{1} \ast \FinishToken_2)\big)}.
\end{align*}
This invariant uses a disjunction for each of the histories $h_x$ and $h_y$ to track the values they contain. 
Focusing on $h_x$ ($h_y$ has the same structure), it is either the case that it represents  
the initial state $h_x = \HistEmpty$, or the first transaction has committed $(h_x = \HistNonEmpty{1} \ast \FinishToken_1)$.
Here, $\FinishToken_1$ is an exclusive token in Iris in the sense that having it twice is impossible, 
\ie $\FinishToken_1 \ast \FinishToken_1 \proves \FALSE$.
In the proof of the \emph{write skew} example with assertions around the $\mathsf{commit}$ operation, we initially create two tokens, 
$\FinishToken_1$ and $\FinishToken_2$, and take one for the proof of each of the clients.
Focusing on the client program with the first transaction, $\FinishToken_1$ can be used to disregard the disjunct 
$(h_x = \HistNonEmpty{1} \ast \FinishToken_1)$ each time the invariant is opened.
This means that the initial state is always observed until the first transaction commits and
updates the history of $x$ from $[]$ to $\HistNonEmpty{1}$, and puts the token in the invariant to satisfy 
the disjunct $(h_x = \HistNonEmpty{1} \ast \FinishToken_1)$.
Because the first transaction observes the global points-to resource of $x$ pointing to the empty history, $[]$, 
until it commits its own $\mathsf{write}$, 
we can assert that there is no write conflict with the $\CanCommitPredicateSymb$ predicate, 
as no other transaction has written to $x$. Therefore, the transaction must commit.


Prior to having mature transactional models in the literature for reasoning about transactional consistency, isolation levels were 
specified using phenomena: small examples that must be prohibited by particular levels \cite{orig-SI}. 
The \emph{write skew} example is one such phenomenon.
It must be prohibited by serializability, while snapshot isolation allows it, as we have seen.
\citet{SI-placeholder} justify that their specifications for isolation level $\TLevel$ capture level $\TLevel$'s guarantees by showing that their specifications rule out the phenomena that isolation level $\TLevel$ should.
The work of this paper lifts the assurance from phenomena to transactional models for actual program executions.
This is analogous to how it was done previously in the literature \cite{Adya99, Adya00} but without any links to program executions.
